%% file: main.tex
\documentclass[pra,aps,reprint,a4paper,superscriptaddress,floatfix]{revtex4-2}
 \usepackage{times} 
\usepackage{graphicx}
\usepackage{subfig}
\usepackage{epsfig}
\usepackage{amsfonts}
\usepackage{amsmath}
\usepackage{amssymb}
\usepackage{dsfont}
\usepackage{amsthm}
\usepackage{color}
\usepackage{comment}
\usepackage{braket}
\usepackage{physics}
\usepackage{multirow}
\usepackage{todonotes}
\usepackage{float}
\usepackage[colorlinks=true,linkcolor=blue,citecolor=blue,urlcolor=blue]{hyperref}

\usepackage{capt-of}   
\usepackage[margin=1in]{geometry}

\newtheorem{theorem}{Theorem}

\newtheorem{corollary}{Corollary}

\newtheorem{remark}{Remark}

\usepackage[ruled,linesnumbered]{algorithm2e}
\usepackage{mathdots}
\usepackage{MnSymbol}

\usepackage{tikz}
\usetikzlibrary{calc,fit}
\DeclareRobustCommand{\stirling}{\genfrac\{\}{0pt}{}}

\SetCommentSty{mycommfont}

\begin{document}


 \title{Bell Inequalities for Smells}

\author{Ricardo Faleiro}
\altaffiliation{These authors contributed equally to this work.}
\affiliation{Quantum Physics of Information Group, Instituto de Telecomunicaões, Lisboa, Portugal}

\author{Flavien Hirsch}
\altaffiliation{These authors contributed equally to this work.}
\affiliation{Quantum Physics of Information Group, Instituto de Telecomunicaões, Lisboa, Portugal}

\author{Emmanuel Zambrini Cruzeiro}
\altaffiliation{These authors contributed equally to this work.}
\affiliation{Quantum Physics of Information Group, Instituto de Telecomunicaões, Lisboa, Portugal}
\affiliation{Quantum Information and Quantum Optics Laboratory, Instituto Superior Técnico, Lisboa, Portugal}

\author{Nicolas Gisin}
\email{nicolas.gisin@unige.ch}
\affiliation{Group of Applied Physics, University of Geneva, 1211 Geneva 4, Switzerland}
\affiliation{Constructor University, Bremen, Germany}


\begin{abstract}

In this work, we study a particular class of Bell inequalities involving only direct equality-comparisons of outcomes. This arises naturally when outcomes are difficult to characterize. For instance, if measurements yield smells, it may be impractical to process them individually, while still being reasonable to judge whether two smells are identical or not. In the bipartite case, the scenario can be interpreted as a natural generalization of full-correlator inequalities (XOR games) beyond binary outputs.  We define the sub-polytope of the local polytope corresponding to this scenario and solve it  for several bipartite and multipartite scenarios by leveraging some structural properties. In doing so, we obtain thousands of new tight inequalities, many of which are also facets of the standard local polytope. We also define \emph{unanimous Bell inequalities}, a particular case of the previous class applied to the multipartite setting in which only full-equality events (all outcomes equal) are considered. We show that such inequalities can always be written as deterministic nonlocal games, and we give a simple multipartite unanimous family and prove its local bound. We show that most of these inequalities admit quantum violations, and we also display aspects of their importance for nonlocality. For instance, we identify examples where such inequalities can act as dimension witnesses, outcome witnesses, witnesses of genuine multipartite nonlocality, as well as being relevant to CHSH. These results show that these simple and elegant inequalities by themselves provide a powerful tool for discovering new Bell inequalities and device-independent witnesses.

\end{abstract}


\maketitle


{\em Introduction --- } In a Bell test, several distant parties perform measurements on subsystems of a composite system \cite{bell1964einstein,CHSH1969,Hensen2015}. Measurements have outcomes and in Bell scenarios we often use discrete variables to label measurement outcomes, whether they are in fact numerical or a more abstract object such as a color or a smell. 
Here we define a particular type of Bell inequalities where arbitrary (potentially physical) outcomes are directly compared. To illustrate our scenario, imagine the measurement outcomes are \textit{smells}. Assume that the only process one can carry out is to compare the smells of different measurements and agree whether they are identical or not. In such a case, the only probabilities that can enter the Bell inequality are of the form $p(=_{AB}|xy)\equiv p(a=b|x,y) = \sum_k p(k,k|x,y)$, where we restricted ourselves to two parties, Alice and Bob, with outcomes $a$ and $b$ and measurement settings $x$ and $y$, respectively.

Most Bell inequalities cannot be expressed using only equalities between outcomes, for example, the \(I_{3322}\) and CGLMP \cite{I3322,CGLMP} inequalities. But some inequalities are suitable, like the famous CHSH-Bell inequality, which can be written as \footnote{Starting from the usual full-correlator version of CHSH and assuming normalization of the behavior.}:
\begin{equation}
     \label{CHSH}
     \begin{split}
         & p(=_{AB}|00)+p(=_{AB}|01) \\
 + & p(=_{AB}|10)-p(=_{AB}|11)\leq 2.     
     \end{split}
\end{equation}

Restricting the probabilities that can appear in a Bell inequality restricts the local polytope. Here, we analyze these restricted polytopes and classify all facet-defining \textit{Bell inequalities for smells} in some scenarios. Notably, we find that increasing the number of outcomes beyond certain \textit{saturated values} does not allow for new Bell inequalities. 

In the case of two parties and binary outcomes, the restricted polytope is identical to that of the XOR scenario (correlation polytope, no marginals) \cite{cleve2004consequences}. In fact, these inequalities can be seen as a natural generalization of XOR games beyond binary outcomes.
Interestingly, we also find a connection between our problem and numbers called Bell numbers,  due to another Bell, the mathematician Eric Temple Bell \cite{Bell1934,Bell1938}, and also known as Ramanujan numbers \cite{becker1948bell,Rankin_1961,Ramanujan1957}. 



{\em Bell inequalities and polytope for smells --- } Consider $n$ parties, where at each round, each party receives one of   $m$ inputs settings, and outputs one of $k$ possible outcomes. Such a scenario will hereby be denoted by $(n,m,k)$. We define a Bell inequality for smells such that only the equality relations between the outcomes of the parties are taken into account. Specifically, we are only interested in whether their outputs are equal or not, for each configuration of inputs. 
  For a fixed input tuple $\mathbf{x} = (x_1,\dots,x_n)$, and any realized outcome $\mathbf{a}=(a_1,\dots,a_n)$, instead of the full set of probabilities $p(\mathbf{a}|\mathbf{x})$, we assume that only equality relations between outcomes can be observed in each run of the experiment, thus inducing an equivalence relation on the parties, \(i \sim j \iff a_i = a_j.
\) Equivalence classes of this relation correspond to groups of parties that obtained the same outcome. The pattern of equalities can thus be represented by a partition $\sigma$ of the set of parties $\{1,\dots,n\}$ \footnote{A partition of a set  $\{1,\dots,n\}$ is a collection of non-empty subsets, such that, subsets are pairwise disjoint and their union gives the entire set.}. For a given $n$ the number of such partitions is the $n$-th Bell number $B_n$~\cite{Bell1934,becker1948bell},
\begin{equation}
  B_n = \sum_{k=0}^{n} \stirling{n}{k},
\end{equation}
where $\stirling{n}{k} = \sum_{l=0}^{k} \frac{(-1)^{k-l} \, l^n }{(k-l)! \, l!}$ the Stirling number of the second kind.

Let us denote by $\Pi_n$ the set of all partitions of $\{1,\dots,n\}$, and by $\sigma \in \Pi_n$ a specific partition. The probabilities accessible in this scenario are then
\begin{equation}
  p(=_{\sigma}|\mathbf{x}) := \Pr[\text{equality pattern } \sigma \text{ is observed} \mid \mathbf{x}],
\end{equation}
for each choice of inputs $\mathbf{x}$ and each partition $\sigma$. For fixed $\mathbf{x}$ the $B_n$ probabilities $p(=_{\sigma}|\mathbf{x})$ sum to one, so $(B_n - 1)$ of them are independent. By convention, we can thus disregard the partition where all outcomes are different. In scenario $(n,m,k)$ there are $m^n$ possible input tuples, and the dimension of the behavior space \(\mathcal{P}_\Sigma\) is therefore
\begin{equation}
  \dim \mathcal{P}_\Sigma = m^n (B_n - 1),
\end{equation}
which is typically much smaller than $m^n k^n$, that is, the dimension of the full behavior space in the corresponding $(n,m,k)$ scenario with fixed outcomes. Thus, a Bell inequality for smells is a linear functional on $\mathcal{P}_\Sigma$,
\begin{equation}
  I =  \sum_{\sigma \in \Pi_n} \sum_{\mathbf{x}} \alpha^{(\sigma)}_{\mathbf{x}}\, p(=_{\sigma}|\mathbf{x}),
  \label{eq:BI_qualities}
\end{equation}
together with a bound $L_\Sigma$ such that $I \le L_\Sigma$ for all classical behaviors \footnote{Classical behaviors are the ones respecting Bell local causality's condition, such that they can be written as \(
    p(a,b|x,y) = \int_{\Lambda}  \Pi(\lambda) \, p(a,b|x,y,\lambda) \, d\lambda = \int_{\Lambda} \Pi(\lambda) \, p_A(a|x,\lambda) \, p_B(b|y,\lambda) \, d\lambda\).}. By restricting the full behavior $p(\mathbf{a}|\mathbf{x})$ to equality patterns one obtains a reduced behavior in $\mathcal{P}_\Sigma$. The convex hull of all classical behaviors in $\mathcal{P}_\Sigma$ defines the sub-polytope $\mathcal{L}_\Sigma$.

An important structural result is that for $(n,m,k)$ scenarios there exists a \emph{saturated} number of outcomes $k^*$ such that allowing more outcomes does not create new classical vertices in $\mathcal{L}_\Sigma$.

\begin{theorem}
\label{Th1}
Consider an $(n,m,k)$ Bell scenario. There exists a finite integer $k^*$ such that the local polytope $\mathcal{L}_\Sigma(n,m,k)$ coincides with $\mathcal{L}_\Sigma(n,m,k^*)$ for all $k \geq k^*$. The explicit expression is given by
\begin{equation}
  k^* =
  \begin{cases}
    \dfrac{n(m+1)}{2}, & n(m+1) \text{ even},\\[1ex]
    \dfrac{n(m+1)}{2} - \dfrac{1}{2}, & n(m+1) \text{ odd},
  \end{cases}
\end{equation}
that is,
\begin{equation}
  k^* = \left\lfloor \dfrac{n(m+1)}{2} \right\rfloor, 
\end{equation}
where   \(\lfloor . \rfloor\) is the floor function.
\end{theorem}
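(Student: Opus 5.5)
The plan is to reduce everything to deterministic local strategies and show that any deterministic strategy can be relabelled, without changing its reduced behavior in $\mathcal{P}_\Sigma$, so that it uses at most $\lfloor n(m+1)/2\rfloor$ distinct outcome labels. First I would record two standard facts. The local polytope in the full $(n,m,k)$ scenario is the convex hull of deterministic behaviors. The map $p(\mathbf{a}|\mathbf{x})\mapsto p(=_\sigma|\mathbf{x})$ is linear. Hence $\mathcal{L}_\Sigma(n,m,k)$ is the convex hull of the images of deterministic strategies. A deterministic strategy is a tuple of functions $f_i:\{1,\dots,m\}\to\{1,\dots,k\}$, one for each party $i$. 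Its image is the $0/1$ vector that, for each $\mathbf{x}$, selects the partition $\sigma$ of the parties induced by the equalities among $f_1(x_1),\dots,f_n(x_n)$.

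Next comes the trivial inclusion. For $k\le k'$, every strategy with $k$ labels is also a strategy with $k'$ labels. Therefore $\mathcal{L}_\Sigma(n,m,k)\subseteq\mathcal{L}_\Sigma(n,m,k')$, and it suffices to show that every deterministic point for arbitrary $k$ is already attained with $k^*$ labels. The key observation is that the image depends only on which \emph{cross-party} coincidences $f_i(x)=f_j(y)$ with $i\neq j$ hold. Coincidences within one party, $f_i(x)=f_i(x')$, never enter a single row $\mathbf{x}$. The image is also invariant under any bijective relabelling of the outcome alphabet.

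The counting step goes as follows. Consider the $nm$ slots $(i,x)$, and call a label used in the strategy \emph{private} if it appears in the slots of only one party, and \emph{shared} otherwise. Take two private labels belonging to the same party and merge them into one. This only creates within-party coincidences, so the image is unchanged. After this merging, each party carries at most one private label, so the number $s$ of private labels satisfies $s\le n$. Every shared label occupies at least two slots, and each private label occupies at least one. Therefore $s+2t\le nm$, where $t$ is the number of shared labels. The total number of labels is then
\begin{equation}
s+t\le s+\frac{nm-s}{2}=\frac{nm+s}{2}\le\frac{n(m+1)}{2},
\end{equation}
and since this number is an integer it is at most $\lfloor n(m+1)/2\rfloor=k^*$. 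After relabelling bijectively onto $\{1,\dots,k^*\}$, we get a strategy in the $(n,m,k^*)$ scenario with the same reduced behavior. This gives $\mathcal{L}_\Sigma(n,m,k)\subseteq\mathcal{L}_\Sigma(n,m,k^*)$ for $k\ge k^*$, and the reverse inclusion was shown above.

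The main point needing care is the merging step. I must check that identifying two labels of the same party cannot create a spurious cross-party equality. It cannot, precisely because both labels are private to that party, so no other party's slot carries either of them. The rest is bookkeeping. The theorem as stated does not claim that $k^*$ is minimal, so no matching lower-bound construction is required. If desired, one could remark that a strategy in which each party has one private label and all other slots are paired across parties attains the bound. This would suggest that $k^*$ is tight for the vertex set, but showing tightness for the polytope itself would need an extra argument.
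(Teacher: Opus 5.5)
Your proof is correct and is essentially the paper's argument in different language. Your private labels are the paper's isolated-node sets $\tilde V_i$ (one output per party after merging), your shared labels are its cross-party connected components of at least two nodes, and your count $s+2t\le nm$ with $s\le n$ is a tidier and fully rigorous form of the paper's exchange argument giving $n+n(m-1)/2$; the paper additionally gives the cyclic pairing construction showing some strategy needs $\lfloor n(m+1)/2\rfloor$ labels, and your worry about polytope-level tightness is unnecessary, because deterministic images are $0/1$ vectors, hence extreme points of the unit cube, so such an image cannot lie in $\mathcal{L}_\Sigma(n,m,k^*-1)$.
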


The proof is combinatorial, based on representing deterministic strategies as multipartite graphs whose connected components encode equality groups; we refer to the Appendix \ref{app:saturating_scenario} for  details. In the bipartite case the result simplifies:

\begin{corollary}
\label{cor:bipartite_saturated}
For any bipartite $(2,m,k)$ Bell scenario, the local polytope $\mathcal{L}_\Sigma(2,m,k)$ is saturated for $k=m+1$ outcomes. That is, $\mathcal{L}_\Sigma$ obtained from $(2,m,k)$ coincides with that from $(2,m,m+1)$ for all $k \ge m+1$.
\end{corollary}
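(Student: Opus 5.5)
The corollary is the special case $n=2$ of Theorem~\ref{Th1}. With $n=2$ the product $n(m+1)=2(m+1)$ is always even, so the first branch of the formula gives $k^*=2(m+1)/2=m+1$, and the floor is not needed. Theorem~\ref{Th1} then gives $\mathcal{L}_\Sigma(2,m,k)=\mathcal{L}_\Sigma(2,m,m+1)$ for every $k\ge m+1$, which is exactly the claim.

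For completeness I would also give a self-contained bipartite argument that makes the value $m+1$ transparent, following the graph picture mentioned after Theorem~\ref{Th1}.

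\emph{Bipartite graph encoding.} A deterministic local strategy is a pair of functions $a:\{1,\dots,m\}\to\{1,\dots,k\}$ and $b:\{1,\dots,m\}\to\{1,\dots,k\}$. Its reduced behavior in $\mathcal{P}_\Sigma$ is the vector of indicators $[a(x)=b(y)]$ over all pairs $(x,y)$. Encode it as a bipartite graph $G$ with vertex set $\{A_1,\dots,A_m\}\cup\{B_1,\dots,B_m\}$ and an edge $A_x B_y$ whenever $a(x)=b(y)$. The reduced behavior depends only on $G$.

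\emph{Canonical relabeling.} Group the $2m$ vertices by their assigned outcome. Each outcome class containing both an $A$-vertex and a $B$-vertex forms a complete bipartite connected component. Any remaining vertices never coincide with anything on the other side, so they may be given arbitrary labels distinct from every other class, provided enough labels exist. The labeling can therefore be made canonical as follows:
\begin{itemize}
\item each mixed class gets its own label;
\item all $A$-only vertices share one fresh label;
\item all $B$-only vertices share another fresh label.
\end{itemize}
This relabeling does not change any indicator $[a(x)=b(y)]$. Since distinct labels on the same side never interact, no further labels are needed.

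\emph{Counting labels.} If there are $c$ mixed classes, each uses at least one $A$-vertex and one $B$-vertex, so $c\le m$. The canonical labeling uses at most $c+2$ labels. The case $c=m$ forces every vertex into a mixed class, leaving no isolated vertices, so it uses $m$ labels. The case $c=m-1$ leaves at most one unused vertex per side, so it uses at most $m+1$ labels. For $c\le m-2$ the count is at most $c+2\le m$. In every case at most $m+1$ labels suffice.

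\emph{Conclusion.} Every deterministic vertex of $\mathcal{L}_\Sigma(2,m,k)$ is therefore already realized with $k=m+1$ outcomes. Conversely, increasing $k$ never removes strategies. The two polytopes share the same vertex set and hence coincide.

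\emph{Expected obstacle.} The only subtle point is the boundary case $c=m-1$, where the bound $c+2$ is attained. This case is what prevents the saturation value from being $m$.
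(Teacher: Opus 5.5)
Your proposal is correct and takes essentially the paper's route. The corollary is the $n=2$ case of Theorem~\ref{Th1}, and your self-contained argument matches the bipartite case in Appendix~\ref{app:saturating_scenario}: one label per mixed component, one shared fresh label for all $A$-only vertices, a second for all $B$-only vertices, and a case count showing at most $m+1$ labels. Splitting the cases by the number $c$ of mixed classes is, if anything, a slightly more careful version of the paper's three-case count by isolated sets.
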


Thus, in bipartite scenarios we can restrict attention to at most $m+1$ outcomes without loss of generality.

In Theorem \ref{Th2}, we provide an analytical expression for the total number of vertices in a bipartite Bell scenario for smells.

\begin{theorem}\label{Th2}
     $\mathcal{L}_\Sigma$ for a bipartite $(m_A,m_B,k,k)$ scenario has the following number of vertices
    \begin{equation}
   \sum_{\alpha=0}^{m_A}\sum_{\beta=0}^{m_B}\binom{m_A}{\alpha}\binom{m_B}{\beta} \sum_{j=1}^{k'} \stirling{m_A-\alpha}{j}\stirling{m_B-\beta}{j} j!
    \end{equation}
where \(k'= k-[\alpha>0]-[\beta>0]\). 
\end{theorem}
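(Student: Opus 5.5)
The plan is to identify the vertices of $\mathcal{L}_\Sigma$ with the distinct $0/1$ equality patterns produced by deterministic strategies, and then count those patterns with a bijective decomposition. For $n=2$ we have $B_2-1=1$, so a behavior in $\mathcal{P}_\Sigma$ is the $m_A\times m_B$ array $p(=_{AB}|xy)$. A deterministic local strategy is a pair of functions $a:\{1,\dots,m_A\}\to\{1,\dots,k\}$ and $b:\{1,\dots,m_B\}\to\{1,\dots,k\}$. It produces the $0/1$ matrix $E_{xy}=[a(x)=b(y)]$. Since $\mathcal{L}_\Sigma$ is the convex hull of these points, every vertex is such a matrix. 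Conversely, every point of $\mathcal{L}_\Sigma$ lies in the cube $[0,1]^{m_A m_B}$, and each $0/1$ point is a vertex of that cube. Hence every realizable $E$ is extreme in $\mathcal{L}_\Sigma$. The vertex count is therefore exactly the number of distinct realizable matrices $E$.

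Next I will give a canonical description of a realizable $E$. It has a set of all-zero rows, which are the Alice inputs whose outcome no Bob input shares; let $\alpha$ be their number. Similarly, let $\beta$ be the number of all-zero columns. On the remaining $(m_A-\alpha)\times(m_B-\beta)$ block, two nonzero rows are either identical or have disjoint supports, because each row is the indicator of $b^{-1}(a(x))$. The same holds for columns. So the block is determined by three pieces of data:
\begin{itemize}
\item a partition of the $m_A-\alpha$ nonzero rows into $j$ blocks (rows with equal outcome),
\item a partition of the $m_B-\beta$ nonzero columns into $j$ blocks,
\item a bijection between the two families of blocks (which outcome value they share).
\end{itemize}
This data is recovered from $E$ itself. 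Rows are grouped by equality of their row vectors, and a row block is matched to the column block forming its support. Distinct data therefore give distinct matrices. Choosing the zero rows and zero columns contributes $\binom{m_A}{\alpha}\binom{m_B}{\beta}$. The partitions and matchings contribute $\stirling{m_A-\alpha}{j}\stirling{m_B-\beta}{j}\,j!$.

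The step I expect to need the most care is realizability, which produces the cutoff $k'$. Given the data, we need $j$ distinct shared values for the matched blocks. If $\alpha>0$, all zero rows can use one common extra value. They need not differ from each other, because only Alice–Bob equalities are observed, but that value must avoid every value Bob uses. If $\beta>0$, all zero columns can use one further value, which must avoid every value Alice uses, including Alice's extra one. This construction uses exactly $j+[\alpha>0]+[\beta>0]$ values, and any realization needs at least that many. So the data is realizable iff $j\le k-[\alpha>0]-[\beta>0]=k'$, and summing over $\alpha,\beta,j$ gives the stated formula.

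Finally, I will check the boundary terms against the conventions $\stirling{0}{0}=1$ and $\stirling{r}{0}=0$ for $r>0$. If exactly one of $m_A-\alpha$, $m_B-\beta$ vanishes, the Stirling product is zero for $j\ge1$, so that case correctly contributes nothing. The one delicate case is the all-zero matrix ($\alpha=m_A$, $\beta=m_B$, $j=0$), which is realizable whenever $k\ge 2$. With the sum starting at $j=1$ this term is absent, so I will verify whether the statement's convention intends it to be excluded or absorbed, and adjust the lower limit accordingly.
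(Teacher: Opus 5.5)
Your argument follows the paper's own proof (Appendix~\ref{app:unanimous_vertices}). You pick the isolated (all-zero) rows and columns, split each party's remaining inputs into $j$ blocks, match the blocks in $j!$ ways, and cap $j$ at $k'$ because Alice's and Bob's isolated inputs each need one fresh value. You also make explicit two things the paper leaves implicit: every realizable $0/1$ pattern is extreme (your cube argument), and distinct data give distinct patterns.

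Your worry about the boundary term is justified, and you should settle it rather than defer it: the statement as written undercounts by one.
\begin{itemize}
\item For $k\ge 2$, let Alice always output one value and Bob always another. This gives the all-zero point, which is a $0/1$ point of $\mathcal{L}_\Sigma$ and hence a vertex.
\item That point arises only from $\alpha=m_A$, $\beta=m_B$, $j=0$. Since $\stirling{0}{j}=0$ for $j\ge 1$, the formula misses it.
\item For $(m_A,m_B,k)=(1,1,2)$ the formula gives $1$, although $\mathcal{L}_\Sigma=[0,1]$ has two vertices.
\item For $(2,2,2)$ it gives $7$, although the CHSH/XOR correlation polytope has $8$ vertices (the $16$ deterministic strategies modulo a global output flip).
\end{itemize}
The paper's proof has the same omission: it also starts the sum at $j=1$ and never treats the configuration with no matched component.

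So do not bend your argument to fit the statement. What you have proved is the corrected count, in which the inner sum runs from $j=0$, using $\stirling{0}{0}=1$, $\stirling{r}{0}=0$ for $r>0$, and empty sums when $k'<0$. Equivalently, the count is the stated expression plus $[k\ge 2]$. The saturation at $k^\ast=m+1$ is unaffected, since the extra vertex is present for every $k\ge 2$.
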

In the previous, $[\cdot]$ denotes Iverson brackets, which evaluate to 1 if the proposition inside the brackets is true and 0 otherwise, and $\stirling{k}{j}$ is the Stirling number of the second kind. The proof is given in Appendix \ref{app:unanimous_vertices}.

The local vertices of Bell scenarios for smells can be constructed from the vertices of the standard Bell scenario, which can be computationally expensive, as the number of vertices seems to grow even more quickly than in the Bell scenario for smells. We note that for bipartite scenarios, the local vertices of Bell scenarios for smells can be constructed directly. In Appendix \ref{app:unanimous_vertices}, we show how to construct vertices of \textit{unanimous inequalities} (defined below) in general multipartite scenarios, which in the bipartite case reduces to the bipartite smells expression of Theorem \ref{Th2}.

\emph{Bipartite inequalities---}
First, we focus on bipartite inequalities, that is, inequalities in scenarios of the form $(2,m,k)$. Therefore we may restrict ourselves to $k\leq m+1$, due to Theorem \ref{Th1}. Nevertheless, since the number of vertices grows exponentially as the inputs increase, we are not able to go beyond five inputs using facet enumeration.  In Appendix \ref{subsec:Bipartite} we present some of the features of these facets in Table \ref{tab:bipartite}, for the scenarios we studied. In general, the inequalities showed some interesting features, namely, a fair amount of facets of \(\mathcal{L}_{\Sigma}\) are also facets of the Bell local polytope \(\mathcal{L}\). In Appendix \ref{subsec:Bipartite} we show some representative examples of inequalities that are party-permutation-invariant (PPI) and can be leveraged for device-independent certification of various features, for example, dimensional witnesses. In the following sections we will highlight some other interesting bipartite inequalities and their uses.


An interesting general property proven for all bipartite  inequalities for smells, whether facet or lower-dimensional faces, is that their No-Signalling (NS) and  Signalling (S) bounds coincide. This follows from the proof in Appendix \ref{app:nosignalling}, which explicitly constructs the vertices of the NS polytope, for arbitrary bipartite Bell scenarios for smells. It also follows from that proof, that  bipartite NS polytopes for smells, akin to the case of the local polytope for smells, admit of a saturating scenario, albeit a much more severe one. In fact, $k^*_{NS} = 2$ for all bipartite scenarios for smells. 

Furthermore, another feature that we verify for all bipartite facet inequalities (except positivity) is that $L < Q < \text{NS} = \text{S}$, 
where $L, Q, \text{NS}, S$ denote the local, quantum, no-signalling and signalling bounds, respectively. We conjecture that this is true of all facet bipartite Bell inequalities for smells. This would imply that there are no trivial bipartite facet inequalities for smells \footnote{This does not hold in the multipartite case as shown explicitly by \(S_{222}\), a tripartite inequality  in \((3,2,2)\) for which \(L=S\).} and that they always have a strict local to quantum and quantum to no-signalling gap, while having no no-signalling to signalling gap.

\emph{Multipartite inequalities ---} Moving to multipartite inequalities for smells, we study various scenarios up to four-partite cases. 

 We first focused on the tripartite case with binary inputs, that is, on $(3,2,k)$, scenarios. From Theorem \ref{Th1} we can restrict the number of outcomes to $k\leq 4$ outcomes. In Appendix \ref{app:Multipartite} we present Table \ref{tab:scenario_3xy}, for the   scenarios we study.  Noticeably, a fair amount of the tripartite facets for smells are also facets of the local polytope, although none is PPI. 

 In scenario $(3,2,2)$, we find three classes of facets, none of which is a facet of the local polytope. One of these, \(S_{222}\), is a trivial inequality in the sense that \(L=Q=NS=S=1 \). 
 \begin{remark}
    Although \(S_{222}\) is not a facet of the local polytope,--- where being a facet would mean that its dimension would be $d_P-1$, with $d_P$ being the dimension of the polytope in question --- it is of dimension $d_P-2$. Indeed, although there are no known trivial facets \((L= Q = NS=S)\) of the local polytope, not counting positivity,  \(S_{222}\) shows that it is possible for an  "almost facet", namely the highest-dimensional non-facet inequality possible, to be trivial in this regard. 
 \end{remark}
 \(S_{222}\) will be further studied, since its output lifted version to \((3,2,3)\), given in Eq.\ref{s222}, will become non-trivial (that is, \(L<S\)) and exhibit quantum violation, as well as some other interesting properties.

 In four-partite we study  the simplest scenario, $(4,2,2)$. A summary of our results is shown in Table \ref{tab:scenario_4xy} in Appendix \ref{app:Multipartite}. We find 6.5 million classes of facets in this scenario. Once again, none are PPI but around 5 million of these classes are also facets of the standard local Bell polytope. This is noteworthy since not many facet Bell inequalities are known in \((4,2,2)\). Indeed, in \cite{bancal2010looking}, using a projection of the polytope to the PPI subspace, the authors managed to find 392 classes of such inequalities out of the total 627 facets of the symmetrized polytope.  
 
\emph{Unanimous Bell Inequalities---}In the multipartite case, a particularly simple subclass of patterns are those in which \emph{all} parties obtain the same outcome; that is, the equality pattern for $n$ parties induces the  whole set $\{1,\dots,n\}$ as the partition for some inputs tuples \(\mathbf{x}\). We call this subclass \textit{Unanimous Bell Inequalities}, and they are written as
\begin{equation}
\label{eq:unanimous inequality}
  I_{\mathrm{un}} = \sum_{\mathbf{x}} \beta_{\mathbf{x}}\, p(=_{ABC...}|\mathbf{x}) \le L_{\mathrm{un}},
\end{equation}

where $L_{\mathrm{un}}$ is the local bound. The corresponding behavior space has dimension at most $m^n$, and the local polytope $\mathcal{L}_{\mathrm{un}}$ is the convex hull of all classical (unanimous) behaviors $ p(=_{ABC...}|\mathbf{x})$.

Unanimous inequalities can be regarded as a special case of inequalities for smells, in which only one partition (the fully-equal one) is retained for each input tuple. Indeed, in the bipartite  case both unanimous and smells characterizations coincide, since the fully-equal partition enforced by the unanimous definition is the only relevant partition in the smells bipartite  scenario. As such, unanimous inequalities inherit some of the structural properties of $\mathcal{L}_\Sigma$, but they exhibit an even stronger saturation phenomenon: for unanimous games the saturated number of outcomes depends only on the smallest number of inputs among the parties and is independent of $n$. 

In Appendix~\ref{app:unanimous_vertices} we give a construction for the local vertices of the \textit{unanimous inequalities} scenario, for any number of parties.  Using this explicit vertex construction we also show that $k^\ast = m_{\min} + 1$, where $m_{\min}$ is the minimum number of settings among the parties, giving for the bipartite case (where unanimous and inequalities for smells coincide) a direct proof of Corollary \ref{cor:bipartite_saturated}. We formalize this result in the following theorem: 

\begin{theorem} \label{Th3}
In scenario $(n,\mathbf{m},k)$, where $\mathbf{m} = (m_A,m_B,m_C,...)$ the local polytope $\mathcal{L}_{\mathrm{un}}(n,\mathbf{m},k)$ corresponding to unanimous games has $N_{\mathrm{un}}$ vertices, where
\begin{multline*}
 N_{\mathrm{un}} = \sum_{i_1=0}^{m_A} \sum_{i_2=0}^{m_B} \sum_{i_3=0}^{m_C}   \dots\binom{m_A}{i_1}\binom{m_B}{i_2} \binom{m_C}{i_3}\dots  \\ \sum_{j=1}^{k'}  \stirling{m_A-i_1}{j}\stirling{m_B-i_2}{j}\stirling{m_C-i_3}{j} \dots (j!)^{n-1}
\end{multline*}

where $k'= k- g(i_1,i_2,i_3,..)$  and

\begin{equation}
g(i_1,i_2,i_3,..) = 
     \begin{cases}
       0 &\quad\text{if} \;  i_l = 0,  \; \forall \, l \\
       2 &\quad\text{if} \; i_l > 0, \; \forall \, l \\
       1 &\quad \text{else}
     \end{cases}
\end{equation}

Noticeably, it follows that for a saturating value of $k^* = m+1$, one has $\mathcal{L}_{\mathrm{un}}(n,\mathbf{m},k) = \mathcal{L}_{\mathrm{un}}(n,\mathbf{m},k^*)$ for any number of outcomes $k \geq k^*$, where $m = \emph{min} \; \mathbf{m}$.

\end{theorem}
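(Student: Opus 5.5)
The plan is to identify the vertices of $\mathcal{L}_{\mathrm{un}}(n,\mathbf{m},k)$ with the \emph{distinct} $0/1$ vectors produced by deterministic local strategies, and then count those vectors through a canonical form. A deterministic strategy is a tuple of functions $a_l:[m_l]\to[k]$, and it induces the vector $v_{\mathbf{x}}=[a_1(x_1)=a_2(x_2)=\dots=a_n(x_n)]\in\{0,1\}^{\prod_l m_l}$. Every point of $\{0,1\}^D$ is an extreme point of the hypercube, hence of any convex hull of $0/1$ points containing it. So the vertices of $\mathcal{L}_{\mathrm{un}}$ are exactly the distinct achievable vectors $v$, and the task reduces to counting those.

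\textbf{Canonical form.} For a strategy, call an outcome \emph{common} if it lies in the image of every $a_l$, and let $j$ be the number of common outcomes. For party $l$, let $D_l$ be the set of inputs $x_l$ with $a_l(x_l)$ not common, and set $i_l=|D_l|$. The remaining $m_l-i_l$ inputs are partitioned by their (common) values into exactly $j$ nonempty blocks. Then
\[
v_{\mathbf{x}}=1 \iff \text{all } x_l\notin D_l \text{ and the blocks containing the } x_l \text{ carry the same common value}.
\]
Hence $v$ is determined by three pieces of data: the sets $D_l$, an unordered partition of $[m_l]\setminus D_l$ into $j$ blocks for each $l$, and a matching of the blocks of each party $B,C,\dots$ to the blocks of party $A$. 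The matchings contribute $(j!)^{n-1}$ choices.

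Conversely, $v$ determines all of this data when $j\ge1$. The set $D_l$ is the set of inputs $x_l$ whose slice of $v$ is identically zero. Two non-dead inputs of the same party lie in the same block iff their slices coincide. The matching is read off from which block-products carry ones. This gives a bijection between achievable nonzero vectors and tuples (subsets $D_l$ of size $i_l$, set partitions into $j$ blocks, matchings). That bijection produces the factor $\binom{m_l}{i_l}\stirling{m_l-i_l}{j}$ for each party, times $(j!)^{n-1}$. The all-zero vector (the case $j=0$) I will check separately against the convention implicit in the formula, where the sum starts at $j=1$.

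\textbf{Realizability with $k$ outcomes.} This is the step I expect to need the most care: showing that the data above is realizable iff $j\le k-g(i_1,\dots,i_n)$.
\begin{itemize}
\item If no party has dead inputs ($g=0$), $j$ labels suffice.
\item If some but not all parties have dead inputs ($g=1$), one extra label $u$ on all dead inputs suffices. Since some party never outputs $u$, the label $u$ is not common. Conversely, at least one non-common label is needed.
\item If every party has dead inputs ($g=2$), one extra label cannot suffice. Any non-common label used must be missing from some party, yet every party needs some non-common label, and with a single extra label that label would be used by all parties and hence be common. Two extra labels suffice: send the dead inputs of $A$ to $u$ and all other dead inputs to $w$.
\end{itemize}
Summing over $(i_l)$ and $j\le k'$ then gives $N_{\mathrm{un}}$.

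\textbf{Saturation.} Finally, I will derive the saturation claim from the constraint $j\le m_l-i_l$ for every $l$, writing $m=\min\mathbf{m}$.
\begin{itemize}
\item If $g=0$, then $j\le m$.
\item If $g=2$, every $i_l\ge1$, so $j\le m-1$ and $j+2\le m+1$.
\item If $g=1$, then $j+1\le m+1$.
\end{itemize}
So for $k\ge m+1$ the upper limit $k'$ never cuts off a nonzero Stirling term. The count, and indeed the vertex set itself (by the bijection above), stabilizes at $k^*=m+1$.
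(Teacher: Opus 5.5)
Your proof is the paper's construction in Appendix~\ref{app:unanimous_vertices}, made rigorous. The shared steps are: dead (isolated) inputs chosen via $\binom{m_l}{i_l}$, the remaining inputs split into $j$ unanimous components via $\stirling{m_l-i_l}{j}$, the $(j!)^{n-1}$ block matchings, $g$ extra labels for the dead inputs, and saturation read off from vanishing Stirling numbers. The paper only counts ``assignments'' and asserts the role of $g$. You additionally prove that distinct data give distinct $0/1$ vectors, so you are really counting vertices, and that $j\le k-g$ is necessary and not just sufficient. Both of these arguments are correct.

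The step you deferred, however, cannot be reconciled with the formula as stated. The all-zero vector is realizable whenever $k\ge 2$: let $A$ always output $0$ and every other party always output $1$. Being a $0/1$ point, it is a vertex. In your parametrization it corresponds to $i_l=m_l$ for all $l$ with $j=0$. Since $\stirling{0}{j}=0$ for $j\ge 1$, no term of $N_{\mathrm{un}}$ counts it. Concretely, for $(2,2,2)$ the formula gives $3+2+2=7$, while the CHSH correlation polytope has $2^{m_A+m_B-1}=8$ vertices; for $(2,3,2)$ it gives $31$ versus $32$.

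So what your bijection actually proves is that $\mathcal{L}_{\mathrm{un}}(n,\mathbf{m},k)$ has $N_{\mathrm{un}}+[k\ge 2]$ vertices. Equivalently, it proves the stated formula with the $j$-sum started at $j=0$ and $\stirling{0}{0}=1$. That extra term survives only when every $i_l=m_l$, where $g=2$, so it contributes exactly when $k\ge 2$.

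The paper's own count has the same omission, as does Theorem~\ref{Th2}. The saturation claim is unaffected, since $k\ge m+1\ge 2$ always includes the zero vertex.
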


 Another interesting feature of unanimous inequalities is that  starting from Eq.\ref{eq:unanimous inequality}, one can always re-write the inequality as a deterministic nonlocal game \(G_{un} \),   that is, \( G_{un} = \sum_{\mathbf{a},\mathbf{x}} \mu(\mathbf{x})\, V(\mathbf{a|\mathbf{x}}) \, p(\mathbf{a} | \mathbf{x})\), with deterministic predicate \(V(\mathbf{a|\mathbf{x}})\in\{0,1\}\) and a valid probability distribution \(\mu(\mathbf{x})\). This result is formally stated in Theorem \ref{lemma: Unanimous Games}, which along with its proof, is in Appendix \ref{section:All unanimous inequalities are nonlocal games}. Since for bipartite scenarios, unanimous inequalities and  inequalities for smells are equivalent, this also allows us to conclude that all bipartite inequalities for smells are deterministic nonlocal games.


As an example, we present a multipartite unanimous Bell family, in the scenario $(N,2,k)$, defined for any number of parties $N\geq 2$ and outcomes \(k\geq3\):
\begin{multline} \label{Unanimous Family}
\mathcal{F}_{N2}
= \\
\sum_{x_1,\dots,x_N}
(-1)^{x_1 \oplus \cdots \oplus x_N}\,
p\!\left(=_{1,\dots,N}\mid x_1,\dots,x_N\right) \\
\le 1 + (N+1)_{\mathrm{mod}\,2}.
\end{multline}

The local bound is $1$ if $N$ is odd, $2$ if $N$ is even. This inequality has a $+1$ coefficient on terms where the parity of the input bit string is $0$, and a $-1$ coefficient on terms where it is $1$. 
The proof for the local bound as a function of $N$ is given in Appendix \ref{app:local_bound_family}. We find that for $N$ even, $\mathcal{F}_{N2}$ has its local bound equal to its no-signaling one, thus no quantum violation is possible. However, for $N$ odd we do find examples of quantum violations. Examples of such quantum violations are shown at the end of Appendix \ref{app:local_bound_family}. The violations suggest that $\mathcal{F}_{N2}$ can act as a dimension witness. We conjecture that $\mathcal{F}_{N2}$ admits maximal violation using an $N-$qutrit state, for any $N$ odd. 

\emph{\(\mathcal{S}_{33}\): A noteworthy inequality in (2,3,3) ---}
\label{sec:s33}We now focus on an inequality (\(\mathcal{S}_{33}\)) which has some interesting features, namely, it serves as a dimension witness for states while simultaneously being \textit{relevant to CHSH} \cite{I3322,gisin2007bellinequalitiesquestionsanswers}, that is, it detects nonlocality in states that do not violate the CHSH local bound.   The inequality in question is, 
\begin{equation} \begin{split}
 \label{F3333}
  \mathcal{S}_{33} = p(=_{AB} | 00) + p(=_{AB}| 01) + p(=_{AB} | 02)\\ + p(=_{AB}| 10) - p(=_{AB}| 12) - p(=_{AB}| 20)\\ + p(=_{AB}| 21) - p(=_{AB}| 22) \le 3.
\end{split} \end{equation}

Given an inequality $I\le L$, we denote by $Q_d(I)$ the maximal quantum value when each party has local Hilbert space of dimension at most $d$.

A \emph{dimension witness} is a Bell inequality whose maximal quantum value depends on the dimension of the local Hilbert spaces.  If $Q_{d_0}(I) < Q_{d_1}(I)$ for some $d_1 > d_0$, then observing a violation larger than $Q_{d_0}(I)$ certifies that the local dimension must be at least $d_1$. 

Using the local polytopes we define above, one can compute the local bounds of this inequality $L_k$ for different values of the number of outcomes $k$, namely, 
\(L_2(\mathcal{S}_{33}) =  L_4(\mathcal{S}_{33}) = 3 =L_{\infty}(\mathcal{S}_{33})\). That is, the local bound using two outcomes already reaches its saturated value (i.e. for the scenario with $k=4$ outcomes, see Corollary \ref{cor:bipartite_saturated}).  On the other hand, by heuristically optimizing over quantum strategies we compute the following lower-bounds on the quantum bounds \(Q_d\), for local dimension $d$ of the bipartite shared state:
\begin{equation*}
\begin{split}
 &Q_2(\mathcal{S}_{33}) \geq 3.5;\;\\
 &Q_3(\mathcal{S}_{33}) \geq 3.6325;\;
Q_4(\mathcal{S}_{33}) \geq 3.6879.
\end{split}
\end{equation*}  Running the Navascués-Vértesi hierarchy over all possible configurations \cite{NV_Hierarchy_OG}, we find that $Q_2(\mathcal{S}_{33}) \leq 3.5$, that is, the lower-bound we find for this case is tight, and higher violations (using arbitrary projective measurements) ensure a larger Hilbert space dimension. Our lower-bounds for $Q_3(\mathcal{S}_{33})$ and $Q_4(\mathcal{S}_{33})$ also provide evidence of dimension witnessing using $\mathcal{S}_{33}$ for local dimension larger than $3$.

Regarding its relevance to CHSH, we find two-qubit entangled states which cannot violate CHSH (using arbitrary measurements) but violate \(\mathcal{S}_{33}\). For instance, \(\mathcal{S}_{33}\) detects the following rank-2 state 
\begin{equation}
    \rho_{p,\theta} = p \ketbra{\Psi_1}{\Psi_1} + (1-p) \ketbra{\Psi_2}{\Psi_2}
\end{equation}
where $\ket{\Psi_1} = \cos(\theta) \ket{00} + \sin(\theta) \ket{11} $, $\ket{\Psi_2} = \ket{01}$, and in the range $p = 0.955, \theta = \pi/14$ it turns out that $\rho_{p,\theta} $ is entangled (it has a concurrence $\approx 0.4144$) but cannot violate CHSH, as can be checked from the Horodecki criterion \cite{Horodecki_criterion}. However, we find that
\begin{align*}
    \mathcal{S}_{33}\left(\rho_{0.955,\pi/14}\right) \geq 3.0023 > 3,
\end{align*}

for suitable measurements (given as auxiliary files for reproducibility in the GitHub online repository \cite{git_unlabelled}).

Considering that these inequalities can be seen as a generalization of XOR inequalities beyond binary outcomes, it is interesting to note that already for trit outcomes, one manages to find an inequality that is \textit{relevant} to CHSH (which is also XOR), in such a low scenario. This seems noteworthy since, for bit outcomes, for which bipartite smells reduce to XOR inequalities,  finding examples of relevant inequalities to CHSH typically demands for significantly higher scenarios \cite{Vertesi08MoreEfficientBell,brierley2017convexseparationconvexoptimization,Designolle2023ImprovedLocalModels,Designolle2026BetterBounds}.

\emph{Other notable examples of inequalities and their properties ---} Below, we give a short explicit list of Bell inequalities for smells and unanimous inequalities which have remarkable and useful properties. 


{\em Party permutation invariant facet of the local polytope and dimension witness}

In scenario $(2,4,5)$ we conjecture having solved the polytope $\mathcal{L}_{\Sigma}$ and find that $1253$ of its $3291$ facets were also facets of the local polytope (see Table \ref{tab:bipartite}). Moreover, we find $10$ of these facets of the local polytope to be party permutation  invariant (PPI), see Appendix \ref{subsec:Bipartite}. Here, we give one representative of this list, the penultimate inequality, \( \mathcal{S}_{4455}\), which displays dimension witness properties. Thus, 
\begin{equation}
\begin{split}
   & \mathcal{S}_{4455} = 3 p(=_{AB}|00) + 3 p(=_{AB}|01)  \\ & +2 p(=_{AB}|02) + p(=_{AB}|03) 
   + 3 p(=_{AB}|10)\\& -  p(=_{AB}|11)  - 2 p(=_{AB}|12) - 3 p(=_{AB}|13) \\&
  + 2 p(=_{AB}|20) - 2 p(=_{AB}|21)  - 3 p(=_{AB}|22) \\ &+ 2 p(=_{AB}|23) 
  + p(=_{AB}|30) - 3 p(=_{AB}|31)\\&  + 2 p(=_{AB}|32) + 2 p(=_{AB}|33) 
     \leq  12
\end{split}
\end{equation}
 is a PPI bipartite Bell inequality for smells, which is a facet of $\mathcal{L}_{\Sigma}$ as well as a facet of the local polytope. Defining $L_k$ as the local bound reached using $k$ outcomes we compute 
\(
L_2(\mathcal{S}_{4455})  = L_5(\mathcal{S}_{4455}) = 12 = L_{\infty}(\mathcal{S}_{4455}).
\)

That is, the local bound using two outcomes already reaches the value for the scenario in which the inequality was derived (i.e. for $5$ outcomes), which is the saturating scenario here (so the local bound cannot grow further). On the other hand, we compute the following quantum bounds \(Q_d\) (with local dimension $d$ of the bipartite shared state) :
\begin{equation}
\begin{split}
 & Q_2(\mathcal{S}_{4455}) \geq 13.08 , \\ 
& Q_3(\mathcal{S}_{4455}) \geq 13.84  , \; Q_4(\mathcal{S}_{4455}) \geq 14.17.
\end{split}
\end{equation}

Thus, $\mathcal{S}_{4455}$ seems to act as a dimension witness for dimensions between $2$ and $4$, supported by the these explicit  lower bounds, coupled with the instances of tractable Navascués-Vértesi hierarchy that we run \cite{NV_Hierarchy_OG}, codes in \cite{git_unlabelled}. 

In Appendix \ref{subsec:Bipartite} we give some other examples of PPI inequalities for smells in different scenarios.

{\em Genuine multipartite nonlocality witness and outcome cardinality witness} 

In multipartite Bell scenarios one can distinguish between several inequivalent forms of multipartite nonlocality \cite{Svetlichny1987Multi}, the strongest notion of which is called genuine multipartite nonlocality (GMNL) \cite{Bancal2013GMNL}. 

One can detect GMNL by violating "sufficiently" a standard Bell inequality, i.e. violating it above the bilocal threshold \cite{Bancal2013GMNL}, which is the maximal value for which violations can be reproduced by a convex combination of bilocal local-hidden-variable models, see Appendix \ref{app:GMNL}. 

Since we consider multipartite scenarios, it is thus natural to ask whether some of our inequalities can demonstrate genuine multipartite nonlocality. Here, we give two explicit examples of such inequalities, a tripartite Bell inequality for smells and a four-partite unanimous inequality, which can detect genuine multipartite nonlocality, defined with no-signaling resources for the grouped parties. 

Consider the following Bell inequality for smells, defined in the $(3,2,3)$ scenario:
\begin{equation}
\begin{split}
\label{s222}
& \mathcal{S}_{222} =   
  p(=_{AB|C}|110) + p(=_{AC|B}|001)  \\& - p(=_{AC|B}|010)  +  p(=_{AC|B}|100)   
 \\& - p(=_{AC|B}|101)  + p(=_{AC|B}|111)    \\
 &  + p(=_{BC|A}|011) \leq  2.
\end{split}
\end{equation}

Again, we can compute the local bounds of this inequality $L_k$ for different values of the number of outcomes $k$. As we have already established the inequality was trivial in scenario \((3,2,2)\), so it is no surprise that \(L_2(\mathcal{S}_{222})= 1\), but now we have  
\begin{align}
L_3(\mathcal{S}_{222}) & =L_4(\mathcal{S}_{222}) = 2 = L_{\infty}(\mathcal{S}_{222}).
\end{align}

Here, the local bound using $2$ outcomes is smaller that the one using $3$ outcomes, which reaches its saturated value, i.e. the value for the scenario with $k=4$ outcomes, see Theorem \ref{Th1}.  The best quantum bounds $Q_d$ we find heuristically are
\begin{align}
  & Q_2(\mathcal{S}_{222}) = 2 \nonumber \\
  & Q_3(\mathcal{S}_{222}) =  Q_4(\mathcal{S}_{222}) \approx 2.000895  
\end{align}

Interestingly, all these strategies provably need at least $3$ outcomes (since for two outcomes the local bound equals the signaling one). These violations therefore acts as ``outcome cardinality witnesses".

As stated, one can also compute the local bound for tripartite genuine nonlocality, using arbitrary no-signaling resources on grouped parties \cite{Bancal2013GMNL}:
\begin{equation}
L_{\text{bilocal-NS}}(\mathcal{S}_{222}) = 2
\end{equation}

That is, bilocal-NS strategies do not over-perform fully local ones for $\mathcal{S}_{222}$. The violation of this inequality thus implies genuine tripartite nonlocality.   

Finally, we move to the four-partite scenario and we present an example which is a facet of $\mathcal{L}_{\mathrm{un}}$, the unanimous polytope: 
\begin{equation}
\begin{split}
&\mathcal{U}_4 =  p(=_{ABCD}|0000) - p(=_{ABCD}|1100) \\&
 + p(=_{ABCD}|1101)+ p(=_{ABCD}|1110) \\&
 - p(=_{ABCD}|1111) \leq 1.
\end{split}
\end{equation}

This particularly simple inequality has also a local bound which coincides with the NS-bilocal bound, i.e. $L_{\text{bilocal-NS}}(\mathcal{U}_4) = 1$ (for any number of outcomes). Using a four-qubit state, we reach a quantum value of

\begin{equation}
    Q_2(\mathcal{U}_4) = 5/4 > 1
\end{equation}

That is, the inequality is violated, and also act as a genuine multipartite nonlocality witness.

{\em Discussion ---}   We studied a class of Bell inequalities only admitting direct comparison of outcomes at each round. 
We defined the local polytope $\mathcal{L}_\Sigma$ corresponding to this class and  solved several bipartite and multipartite scenarios. This yielded large families of tight inequalities for smells, many of which are also facets of the standard local polytope $\mathcal{L}$. The reduced dimension of the behavior space and the smaller number of vertices of $\mathcal{L}_\Sigma$ make facet enumeration computationally tractable in scenarios where the full local polytope is out of reach. Facet enumeration of the local Bell polytope is limited to very simple scenarios and although a few methods have been proposed to go beyond direct facet enumeration \cite{cope2019bell,jesus2023tight}, Bell inequalities for smells seem to provide, in their own right, a useful way to produce large amounts of facet Bell inequalities of the local polytope. 

We further introduced unanimous Bell inequalities as an even more restrictive scenario, focusing only on full-equality events, and introduced a simple multipartite unanimous family.  We exhibited unanimous facets up to five parties, and identified examples that act as GMNL witnesses and dimension witnesses. Among the inequalities we find, several have particularly appealing properties: they are party-permutation invariant, have simple analytical forms, and serve as tight Bell inequalities, dimension witnesses, outcome witnesses, or GMNL witnesses.  We expect that these inequalities, and the remainder of facets we have obtained, will be useful in future studies of multipartite nonlocality and device-independent protocols. 


\begin{acknowledgments}
The authors thank Sébastien Designolle for computational power, specifically helping us produce the list of Bell inequalities for smells $(2,5,2)$. The authors also thank Pavel Sekatski and Jef Pauwels for discussions and insightful comments. RF and E.Z.C. acknowledge funding by FCT/MCTES - Fundação para a Ciência e a Tecnologia (Portugal) - through national funds and when applicable co-funding by EU funds under the project UIDB/50008/2020. E.Z.C. also acknowledges funding by FCT through project 021.03707.CEECIND/CP1653/CT0002.
F.H. acknowledges funding by FCT/MCTES - Fundação para a Ciência e a Tecnologia (Portugal) through exploratory project 2023.12518.PEX. N.G. acknowledges support by the Swiss NCCR-SwissMap.
\end{acknowledgments}

\bibliography{common}

\appendix


\section{Local strategies in  Bell inequalities for smells and their saturating bound} \label{app:saturating_scenario}

In Bell scenarios for smells, given a particular choice of inputs, the only relevant information is knowing which parties produce equal outputs. As such, the exact values of the outputs $a_x$, $b_y$ etc.,  given for inputs $x$, $y$ etc. are irrelevant, and only the information established by equality relation \((=)\) matters to evaluate whether $a_x = b_y$ (similarly for the other parties). 

This equality relation \((=)\)  imposed over the output set, induces an equivalence relation (a binary relation that is reflexive, symmetric, and transitive) over the input set. Consider for simplicity the bipartite  case, where for inputs \(x,y\) Alice and Bob deterministically give outputs \(a_x,b_y\), then
 \(x \equiv y \iff a_x = b_y\), or alternatively
 \(x \not\equiv y \iff a_x \neq b_y.\) A natural way to represent these strategies is to use multipartite graphs, where each  partition of nodes represents the input set of the corresponding player, such that when two inputs $x$ and $y$ necessitate equal outputs, we connect such nodes with an edge. In this representation it is easy to see that  \(x \; R\; y \iff a_x = b_y\), or alternatively
 \(x \not R y \iff a_x \neq b_y,\) where \(R\) is the reachability relation between two vertices, which is also an equivalence relation. Thus sets of reachable nodes define a \textit{connected component}  of the multipartite graph (a subgraph of the multipartite graph where each  node of it is connected by a path),  demanding for those inputs equal outputs.  As an illustration in the bipartite  case, consider the following bipartite  graph

\begin{center}
\begin{tikzpicture}[baseline=(current bounding box.center), scale=0.6]

  \foreach \i in {0,1,2} {
    \node[draw, circle, fill=black, inner sep=1.5pt,
          label={[xshift=-9pt]left:${\i}$}] (a\i) at (0, -\i) {};
  }

  \foreach \j in {0,1,2} {
    \node[draw, circle, fill=black, inner sep=1.5pt,
          label={[xshift=9pt]right:${\j}$}] (b\j) at (2, -\j) {};
  }

  \draw[line width=0.8pt] (0, -1) ellipse [x radius=0.65, y radius=1.55];
  \draw[line width=0.8pt] (2, -1) ellipse [x radius=0.65, y radius=1.55];

  \draw (a0) -- (b0);
  \draw (a0) -- (b1);
  \draw (a2) -- (b2);

\end{tikzpicture}
\end{center}

where Alice's inputs are shown in the left partition, and Bob's in the right partition. This strategy, dictates that: 
\begin{itemize}
    \item Alice and Bob should output the same result when they receive either $0$  and $0$, or $0$ and $1$.
    \item Alice and Bob should output the same result when they  receive $2$ and  $2$;
    \item Alice and Bob should output different results in all other cases;
\end{itemize}

Note that by transitivity one could add an edge between $0$ and $1$  on Bob's side, since $a_0 = b_0$ and $a_0 = b_1$ implies \(b_0 = b_1\) but  edges within Alice's partition or Bob's partition are irrelevant, as what matters is the edges across partitions linking Alice's nodes and Bob's nodes, determining the grouping of the inputs into the connected components of the graph. The previous graph can be then partitioned into two distinct connected components and an isolated vertex on Alice's side. Thus, an implementation of this strategy would require that Alice has 3 possible symbols for ouputs and Bob just 2.  In summary, this graph representation implies that only vertices belonging to the same bipartite connected component should have the same output.

\subsection{Proof of Theorem \ref{Th1}}

\subsubsection{Bipartite case}

Consider the bipartite graph \(G = (\{X, Y\}, E)\), where \(X,Y\) are the partitioned vertex sets, and \(E\) the set of \(X-Y\) edges. A bipartite connected component \(C \subseteq G\) is defined as a subset where any two nodes in the component are reachable from one another and not reachable from any node outside the component. By definition, all outputs within a single connected component must be the same. Additionally, outputs associated with different connected components must differ. Next, we introduce the sets:
\begin{align}
   \Tilde{V}_A = \{v \mid v \in X, \forall b \in Y, \{v, b\} \notin E(G)\} \\
\Tilde{V}_B = \{v \mid v \in Y, \forall a \in X, \{v, a\} \notin E(G)\}
\end{align}

These sets represent nodes that are not in relation to any node from the opposite partition. To minimize the total number of outputs, we assign a unique output to all nodes in $\tilde V_A$ and a different unique output to all nodes in $\tilde V_B$. Nodes in $\tilde V_A$ can share the same output value without creating additional equality relations, since none of them is connected to a node in the opposite partition. The same reasoning applies to $\tilde V_B$. \(\Tilde{V}_A\) and \(\Tilde{V}_B\) must have different outputs from each other and from any other bipartite connected component -otherwise, this would imply connectivity, leading to a contradiction. Consequently, any bipartite graph \(G = (\{X, Y\}, E)\) can be expressed as:
\[
G = C_1 \cup C_2 \cup \cdots \cup \Tilde{V}_A \cup \Tilde{V}_B
\]
Now, by requirement, we must assign a unique output label to each connected component \(C_i\) and  sets \(\Tilde{V}_A\) and \(\Tilde{V}_B\), provided they are non-empty. Therefore, the number of outputs required to cover all possible strategies is:
\[k^* = \max_{G}{\left\{|C| + [|\Tilde{V}_A| > 0] + [|\Tilde{V}_B| > 0]\right\}},
\] where \(C\) is the set of all  bipartite connected components of \(G\), and \([|\Tilde{V}_A| > 0]\) is the evaluation about whether the set is not empty in \(G\), in which case it will contribute with one distinct output, likewise for Bob.
Considering that the minimal connected component in a bipartite graph is a single \(X-Y\) edge, and there are at most as many connected components with single edges (call \(e\) the number of disjoint edges) as the number of nodes per party (or, in inhomogeneous scenarios, the number of nodes of the party with the smallest number of inputs),  we  find \(k^*\) by considering the following three possibilities:
\begin{enumerate}
    \item \(|\Tilde{V}_A| = |\Tilde{V}_B| = 0\), such that \(k_1^* = \max{\{e + 0\}} = m\), corresponding to no isolated node.
    \item \(|\Tilde{V}_A| = 0\), \(|\Tilde{V}_B| > 0\) (or vice versa), such that \(k_2^* = \max{\{e + 1\}} = m-1 +1 = m\), corresponding to one isolated node in one of the partitions.
    \item \(|\Tilde{V}_A| > 0\) and \(|\Tilde{V}_B| > 0\): \(k_3^* = \max{\{e + 2\}} = m-1 +2 = m + 1\), corresponding to one isolated node in both partitions.
\end{enumerate}

Thus, since the previous exhausts all possibilities, given a scenario for smells \((2, m, k)\), we have shown that no more than \(m + 1\) different outputs are needed to account for all possible local deterministic strategies.

\subsubsection{Multipartite case}

We can now extend a similar arguments for the generic multipartite case. Consider the (homogeneous) scenario $(n,m,k)$, where each of the $n$ parties have $m$ different inputs, each with $k$ outputs. We derive here the maximum value for $k$ such that all possible classical strategies are doable. 

We now have a multipartite graph $G$ consisting of $n$ partitions of $m$ nodes each, and a strategy consists again of a way to draw edges between nodes from different partitions, creating connected components corresponding to inputs having (deterministically) the same outputs. 

Like in the previous section, the maximum number of outputs one needs corresponds to the largest possible number of different groups over all graphs, which is given by the number of connected components plus some potential disconnected nodes: 

\begin{equation}
\begin{aligned}
k^* = \max_{G}\Bigl\{\, |C| + [|\Tilde{V}_A| > 0] + [|\Tilde{V}_B| > 0] \\
\qquad\qquad\quad + [|\Tilde{V}_C| > 0] \; + \, \ldots \Bigr\}
\end{aligned}
\end{equation}

where as in the previous section \(C\) is the set of all connected components of \(G\), and \([|\Tilde{V}_A| > 0]\) is the evaluation about whether the set for party $A$ is not empty in \(G\), in which case it will contribute with one distinct output, likewise for \([|\Tilde{V}_B| > 0]\), \([|\Tilde{V}_C| > 0]\), etc.

This number can be maximized in the following way: first note that, like in the bipartite case, the optimal number of disconnected nodes per party is $|\Tilde{V}| = 1$, for every party. That is, groups containing a single node are the most economical way to "use" an outcome. 
More formally, consider a graph with (say) party $A$ having no disconnected node. Taking one node and disconnecting it will either preserve the number of outcomes used (if the node was connected to a single different node and that party already had at least one isolated node) or add one outcome (in all other cases). On the other hand, having additional disconnected nodes does not add extra outcomes, as we saw that all disconnected nodes of one party can be allocated a single outcome. Hence, the graphs maximizing the number of groups can be taken to have

\begin{equation}
    |\Tilde{V}_A| = |\Tilde{V}_B| = |\Tilde{V}_C| = ... = 1
\end{equation}

Second, the next most economical groups are the connected components of exactly two nodes. We have $m-1$ nodes left per party, and we can directly upper-bound how many of these optimal groups one can make: 

\begin{align}
    |C| \leq \frac{n (m-1) }{2}
\end{align}

since this is the maximum number of pairs one can create with $n$ groups of $m-1$ nodes each, corresponding to perfect pair matching. 

The bound of perfect pair matching can in fact be reached whenever $n$ and $m - 1$ are not both odd, with the following recipe, using an arbitrary labeling for the $m - 1$ non-isolated nodes of each party:

\begin{equation}
\begin{minipage}{0.85\linewidth}
\begin{itemize}
    \item Take node $0$ from party $A$ and connect it to node $0$ of party $B$.
    \item Take node $1$ from party $B$ and connect it to node $0$ of party $C$.
    \item Take node $1$ from party $C$ and connect it to node $0$ of party $D$.
    \item \ldots
    \item Take node $1$ of the last party and connect it to node $1$ of party $A$.
\end{itemize}
\end{minipage}
\label{cyclic_procedure}
\end{equation}

This cyclic procedure creates a perfect pair matching for the first two nodes of each party. If $m-1$ is even, the procedure can be repeated until all nodes are part of an optimal $2-$node connected component, and one thus ends up with 

\begin{align}
     k^* &= |C| + [|\Tilde{V}_A| > 0] + [|\Tilde{V}_B| > 0] + [|\Tilde{V}_C| > 0] \; + \, ...   \nonumber \\
& = \frac{n (m-1) }{2} + n = \frac{n (m+1)}{2}
\end{align}

We illustrate the resulting graph for the case of $n=3$ parties with $m=3$ outputs on Figure \ref{fig:33}.

\begin{figure}[h]
  \centering
\input{Figures/tripartiteeven.tikz}
  \caption{A strategy maximizing the number of outputs for $3$ parties with $3$ inputs each, using cyclic procedure \ref{cyclic_procedure}. We end up with $3$ isolated nodes and a perfect matching of $3$ pairs for the remaining nodes, reaching $k^* = 6$. }
  \label{fig:33}
\end{figure}
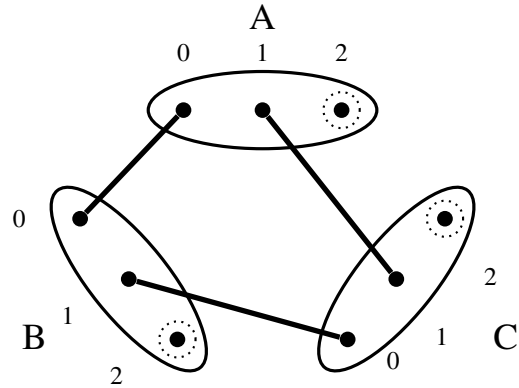

If $m-1$ is odd but $n$ is even one can also reach the bound: apply the cyclic procedure \ref{cyclic_procedure} for the first $m-2$ nodes of each party, achieving perfect pair matching for them. Next, there is exactly one node remaining per party, and since the number of parties is even one can match them perfectly by pairs of parties: $A - B$ , $C - D$, etc. One thus ends of again with the optimal case of $1$ disconnected node per party and all other nodes in optimal connected components of $2$ nodes. Again, 

\begin{align}
     k^* =  \frac{n (m-1) }{2} + n = \frac{n (m+1)}{2}
\end{align}

Finally, we are left with the case where both $n$ and $m-1$ are odd. In that case, the maximum number of pairs is given by $\frac{n (m-1) -1}{2} $ (since the number of nodes to connect, $n (m-1)$ is odd). This number can be achieved by the procedure described above: apply cyclic procedure \ref{cyclic_procedure} for the first $m-2$ nodes of each party, achieving perfect pair matching for them. Next, there is exactly one node remaining per party, and again we connect $A$ to $B$, $C$ to $D$, etc. Only one node of the last party will not be matched, thus achieving the optimal number of pairs for this scenario

\begin{align}
     k^* =  \frac{n (m-1) -1 }{2} + n = \frac{n (m+1)}{2} - \frac{1}{2}
\end{align}

We illustrate the resulting graph for the of $n=3$ parties and $m=4$ outputs.

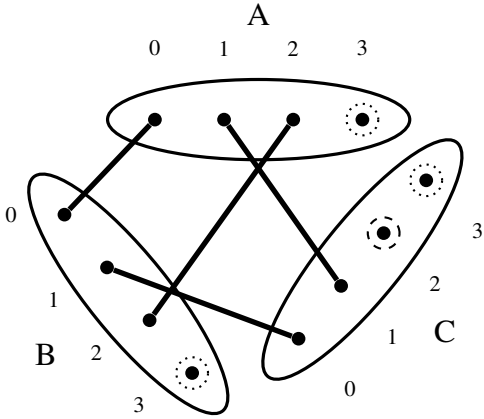
\begin{figure}[H]
  \centering
\input{Figures/tripartiteodd.tikz}
  \caption{A strategy maximizing the number of outputs for $3$ parties with $4$ inputs each, using cyclic procedure \ref{cyclic_procedure}. We end up with $4$ isolated nodes and a perfect matching of $4$ pairs for the remaining nodes, reaching $k^* = 7$. Note that the node with dashes around is ``unused'' due to the fact that both $n$ and $m-1$ are odd.}
  \label{fig:34}
\end{figure}

All in all, one thus obtains 

\[
k^* = 
\begin{cases} 
 \frac{n (m+1)}{2}, & \text{if } n (m+1) \text{ is even} \\
 \frac{n (m+1)}{2} - \frac{1}{2} , & \text{if } n (m+1) \text{ is odd}
\end{cases}
\]




\section{Construction of unanimous vertices} \label{app:unanimous_vertices}

For unanimous inequalities, which consider only cases where all parties produce the same output, we can explicitly construct the local deterministic strategies, for arbitrary number of parties and inputs. 

Let us assume $m_A$ measurements for Alice, $m_B$ for Bob, $m_C$ for Charlie, etc., and at most $k$ outcomes for each party. The general procedure to construct exhaustively the local deterministic strategies is the following: first pick the isolated nodes, then consider the number of unanimous components for the remaining connected nodes and construct the possible configurations for each number of unanimous component. We call \textit{unanimous component} a connected component with at least one node in each partition -as these are the only relevant connected components in the unanimous case. We detail the full procedure:

\begin{itemize}
    \item Determine which nodes will be the isolated nodes, which are here all nodes not belonging to a unanimous component

    \item Take the non-isolated nodes, and decide how many unanimous components there will be, that is, how many different outcomes will be used for these nodes

    \item Let us call the number of different outcomes $j$, what remains to do it to decide in which of the $j$ unanimous components each node goes. This determines the edges to draw (they link nodes which are in the same unanimous component) 
    
\end{itemize}

The first step will amount to

\begin{equation}
   \binom{m_A}{\alpha}\binom{m_B}{\beta} \binom{m_C}{\gamma} ...
\end{equation}

possibilities, where the multiplication runs over all parties, and $\alpha$ is the number of isolated nodes for Alice, $\beta$ for Bob, $\gamma$ for Charlie, etc.

Now, given $j$ outcomes (or equivalently $j$ unanimous components) in which one can sort the remaining  nodes we will have

\begin{equation}
    \stirling{m_A-\alpha}{j}\stirling{m_B-\beta}{j}\stirling{m_C-\gamma}{j} ...
\end{equation}

ways of partitioning the nodes of all parties (a partitioning of the nodes of a party into $j$ subgroups ensures that each of the $j$ outcomes is allocated at least one node of each party, while each local partition corresponds to a different strategy, given any fixed partitions for the other parties). However, this does not take into account the different "shuffling" of local subgroups (for example, the 'first' subgroup of Bob can still be allocated to any of the $j$ outcomes, the second subgroup to any of the remaining $j-1$, etc.). Thus, the overall number of strategies corresponding to having $j$ different unanimous components will be given by 

\begin{equation}
    \stirling{m_A-\alpha}{j}\stirling{m_B-\beta}{j}\stirling{m_C-\gamma}{j} ... (j!)^{n-1}
\end{equation}

where $n$ is the number of parties. The term $(j!)^{n-1}$ counts all the different ways of shuffling the $j$ subgroups for each party (except say Alice, whose partitioning defines the arbitrary allocation of subgroups to outcomes, only the other parties have to shuffle their subgroup-outcome matching). 

In overall, the number of assignments will therefore be given by

\begin{align}
\nonumber  \sum_{\alpha,\beta,\gamma,..=0}^{m_A,m_B,m_C,..}  \binom{m_A}{\alpha}\binom{m_B}{\beta} \binom{m_C}{\gamma} ... \\
\sum_{j=1}^{k - g(\alpha,\beta,\gamma,..)}  \stirling{m_A-\alpha}{j}\stirling{m_B-\beta}{j}\stirling{m_C-\gamma}{j} ... (j!)^{n-1}
\end{align}

where $k$ is the number of outcomes, and $g(\alpha,\beta,\gamma,..)$ is given by

\begin{equation}
g(\alpha,\beta,\gamma,..) = 
     \begin{cases}
       0 &\quad\text{if} \; \alpha = \beta = \gamma = ... = 0\\
       2 &\quad\text{if} \; \alpha > 0, \, \beta > 0, \, \gamma > 0 , ... \\ 
       1 &\quad  \text{else}  \\ 
     \end{cases}
\end{equation}

That is, the function $g(\alpha,\beta,\gamma,..)$ counts how many outcomes are already used for the isolated nodes (and thus have to be subtracted from the number of unanimous components available for the connected nodes). There are three cases: if there is no isolated node ($\alpha = \beta = \gamma = ... = 0$) all outcomes are available, $g =0$. If at least one node is isolated, but there is at least one party without an isolated node (e.g. $\alpha > 0 , \, \beta = 0$), one can attribute a single outcome to all the isolated nodes (since this will not result in a unanimous component), $g=1$. Finally, if all parties have at least one isolated node ($\alpha > 0, \, \beta > 0, \, \gamma > 0 , ...$), one has to use two different outcomes to associate to them, in a way that does not create a unanimous component (which can always be done by for instance associating outcome $0$ to all isolated nodes in the first $n-1$ parties, and outcome $1$ to the isolated node of party $n$), here $g=2$.

Finally, one can deduce from that formula that the number of vertices does not change from $k^* = m_A + 1$, where $m_A$ is the smallest number of inputs among the parties. Indeed, for $g=0$ (and thus $\alpha = 0$) the term

\begin{align}
 \stirling{m_A-\alpha}{j} = \stirling{m_A}{j}
\end{align}

will evaluate to $0$ for all $j > m_A$, and thus $k_{max} = m_A$. For $g=1$ and $\alpha=1$

\begin{align}
 \stirling{m_A-1}{j} 
\end{align}

will evaluate to $0$ for all $j > m_A - 1$ and thus $k_{max} - 1 = m_A - 1 \Rightarrow  k_{max} = m_A$. For $g=1$ and $\alpha=0$

\begin{align}
 \stirling{m_A}{j} 
\end{align}

will evaluate to $0$ for all $j > m_A $ and thus $k_{max} - 1 = m_A  \Rightarrow  k_{max} = m_A + 1$. Finally, for $g=2$ (and thus $\alpha \geq 1$)

\begin{align}
 \stirling{m_A-1}{j} 
\end{align}

will evaluate to $0$ for all $j > m_A - 1$ and thus $k_{max} - 2 = m_A - 1 \Rightarrow  k_{max} = m_A + 1$. In all cases, the sum evaluates to $0$ for $k > m_A + 1$.

\section{Bell inequalities for smells: \\A partial repository}

Using Parallel AdjaceNcy Decomposition Algorithm (PANDA) \cite{lorwald2015panda}, a software for polyhedral transformation, we are able to enumerate the facets for various scenarios.  

{\em Symmetries--} Many of the produced facets will be equivalent under certain symmetries, so we must sort them into classes of symmetry. As in a standard Bell scenario, we may consider party, input, and output permutations. However, since we only score the players based on whether their outputs are equal, some output permutations will not have to be taken into account. Specifically, output permutations that preserve the probability distributions $p(k_{i_1}=\dots =k_{i_N})$ will result in exactly the same Bell inequality. Moreover, output permutations are not valid in generic scenarios, as they do not represent bijections anymore, i.e. they make vertices leave the probability subspace of the corresponding Bell scenario for smells. Output permutations will only be valid for bipartite scenarios with binary outcomes, for any number of parties. 

The party and the input permutations generate each a symmetry group which is isomorphic to the permutation group $\mathcal{S}_n$, which has two generators. Hence, to classify $(n,m,k)$ Bell inequalities for smells we only need to consider four generators for the whole symmetry group, except for $k=2$ where we add the output permutation.

\subsection{Bell facets for smells/unanimous in \((2,m,k)\)}
\label{subsec:Bipartite}

For each bipartite  scenario \((2,m,k)\) up to five inputs  (\(m=5\)) we provide in Table \ref{tab:bipartite}, the number of classes, the number of such classes which are also facets of the corresponding standard Bell scenario, and the number of party-permutation-invariant (PPI) inequalities. PPI Bell inequalities usually have simpler, elegant forms.

\begin{table}[H]
    \centering
    \begin{small}
        \begin{tabular}{ |c|c|c|c| }
            \hline
            Scenario & Classes & $\mathcal{L}$-facets & PPI \\
            \hline
            $(2,2,2)$ & 1 & 1 & 1\\ 
            $(2,2,3)$ & 1 & 0 & 1\\  
            \hline
            $(2,3,2)$ & 1 & 1 & 1\\    
            $(2,3,3)$ & 4 & 0 & 2   \\  
            $(2,3,4)$ & 3 & 0 &  2 \\  
            \hline
            $(2,4,2)$ & 3 & 3 & 3  \\ 
            $(2,4,3)$ & 6947$^*$ &  3277 & 163 \\  
            $(2,4,4)$ & 3295$^*$ & 1255 & 86 \\
            $(2,4,5)$ & 3291$^*$ &  1253 & 84 \\
            \hline
            $(2,5,2)$ & 1281$^*$ & 1281 & 154 \\
            \hline
        \end{tabular}
    \end{small}
    \caption{For each $(2,m,k)$  Bell scenario for smells, we provide the number of classes, $\mathcal{L}$-facet classes, and PPI inequalities. The asterisk indicates scenarios for which we conjecture that the list of classes of facets obtained is complete. We do not count positivity in the classes.}
    \label{tab:bipartite}
\end{table}
In \((2,2,2)\), \((2,2,3)\), and \((2,3,2)\), we recover the CHSH inequality, as expected, and it turns out to be the only nontrivial class in these scenarios.  We conjecture that $(2,4,3)$, $(2,4,4)$ and $(2,4,5)$ are solved. Notice also that it naturally follows from Corollary \ref{cor:bipartite_saturated} that conjecturing that scenario $(2,4,5)$ is  solved lifts to a conjecture that $(2,4,k)$ for \(k\geq5\) is solved.

We focus on some PPI inequalities. For example, in \((2,3,3)\) we have inequality:
\begin{equation}
\begin{split}
    & -p(=_{AB} | 00) - p(=_{AB} | 01) - p(=_{AB} | 02)\\ & - p(=_{AB} | 10) 
     -p(=_{AB} | 11) + p(=_{AB} | 12)\\& - p(=_{AB} | 20) + p(=_{AB} | 21) 
     -p(=_{AB} | 22) \leq 1.
\end{split}
\end{equation}
Through heuristic methods (using seesaw of SDP's) we obtain the following  lower bounds for quantum violations \(Q_2\geq 1.24; Q_3 \geq 1.32\), furthermore, as with all bipartite  inequalities we studied we have \(NS=S\), which in this case equals 2.
Another PPI inequality with interesting properties, and possibly generalizable to larger scenarios, is the following, in the (2,4,3) scenario:
\begin{equation}
\begin{split}
    & p(=_{AB} | 00) + p(=_{AB} | 01) + p(=_{AB} | 02) - p(=_{AB} | 03) \\
    & + p(=_{AB} | 10) + p(=_{AB} | 11) - p(=_{AB} | 12)\\& + p(=_{AB} | 13) 
     + p(=_{AB} | 20) - p(=_{AB} | 21)\\& - p(=_{AB} | 22) - p(=_{AB} | 23)  - p(=_{AB} | 30) \\&+ p(=_{AB} | 31) - p(=_{AB} | 32) - p(=_{AB} | 33) \leq 4.
\end{split}
\end{equation}

For this one, we establish that \( 4.8284. \leq Q \leq 5.1\), where the lower-bound is obtained through an SDP seesaw analysis and the upper-bound  from NPA level ~2. We also have,  \(NS= S = 8\).

In $(2,4,5)$, from the \(3291\) classes of facets, we find that \(10\) of them are both facets of $\mathcal{L}$, and are PPI. The coefficients of these facets are listed in the  Table \ref{tab:matrix}. We focus on the penultimate one, \(\mathcal{S}_{4455}\) and studied it in more detail in the main text.

\begin{table*}[t]
\centering
\small
\setlength{\tabcolsep}{4pt}
\[
\begin{array}{*{16}{c}}

\hline
00 & 01 & 02 & 03 & 10 & 11 & 12 & 13 & 20 & 21 & 22 & 23 & 30 & 31 & 32 & 33 \\ \hline
3 & 2 & -1 & -1 & 2 & -2 & 2 & -2 & -1 & 2 & 1 & 1 & -1 & -2 & 1 & 1 \\ \hline
4 & 2 & -1 & -1 & 2 & -1 & 1 & 1 & -1 & 1 & 1 & -1 & -1 & 1 & -1 & 0 \\ \hline
2 & 2 & 1 & 0 & 2 & -3 & -2 & 1 & 1 & -2 & 1 & -1 & 0 & 1 & -1 & 1 \\ \hline
6 & 2 & 1 & -5 & 2 & 1 & 2 & 3 & 1 & 2 & -3 & 2 & -5 & 3 & 2 & -2 \\ \hline
2 & 2 & 1 & 1 & 2 & -1 & -1 & -2 & 1 & -1 & -2 & 1 & 1 & -2 & 1 & 2 \\ \hline
6 & 2 & 1 & -3 & 2 & 4 & -2 & 4 & 1 & -2 & -1 & 1 & -3 & 4 & 1 & -3 \\ \hline
7 & 4 & 2 & 1 & 4 & -3 & -4 & -2 & 2 & -4 & 1 & 1 & 1 & -2 & 1 & -2 \\ \hline
8 & 5 & 2 & 1 & 5 & -4 & -4 & -2 & 2 & -4 & 1 & 1 & 1 & -2 & 1 & -1 \\ \hline
3 & 3 & 2 & 1 & 3 & -1 & -2 & -3 & 2 & -2 & -3 & 2 & 1 & -3 & 2 & 2 \\ \hline
5 & 4 & 3 & 1 & 4 & -6 & -4 & 2 & 3 & -4 & 1 & -3 & 1 & 2 & -3 & 2 \\ \hline
\end{array}
\]
\caption{List of ten PPI Bell inequalities for the $(2,4,5)$ scenario which are facets of the smells/unanimous local polytope as well as facets of the $(2,4,5)$ standard Bell polytope. The first line gives the values of inputs $(x,y)$, with corresponding coefficients $\beta_{xy}$ below.}
\label{tab:matrix}
\end{table*}

\subsection{Multipartite Bell facets for smells}

\label{app:Multipartite}

{\em Bell tripartite facets for smells ---} We now turn to the tripartite case, specifically for binary inputs, that is, the scenario  $(3,2,k)$, where we  we can restrict the number of outcomes to $k\leq 4$ outcomes. Our finding are shown in Table.\ref{tab:scenario_3xy} 
\begin{table}[H]
    \centering
    \begin{small}
        \begin{tabular}{ |c|c|c|c| } 
            \hline
            Scenario & Classes & $\mathcal{L}$-facets & PPI \\
            \hline
            $(3,2,2)$ & 3 & 0 & 0\\   
            $(3,2,3)$ & 5731 & 729 & 0 \\
            $(3,2,4)$ & 555 & 349 & 0 \\
            \hline
        \end{tabular}
    \end{small}
    \caption{For each $(3,m,k)$ Bell scenario for smells, we provide the number of classes, $\mathcal{L}$-facet classes, and PPI inequalities.}
    \label{tab:scenario_3xy}
\end{table}


In \((3,2,3)\), we also have the following facet,
\begin{equation}
\begin{split}
& \mathcal{S}_{222,333} =   \\&
  - p(=_{AB|C}|111) - p(=_{AC|B}|000) + p(=_{AC|B}|001) \\& +  p(=_{AC|B}|010)   
  - p(=_{AC|B}|011)  + p(=_{AC|B}|100) \\& - p(=_{AC|B}|101)  - p(=_{BC|A}|111) \leq  0,
\end{split}
\end{equation}
This inequality is also a facet of $\mathcal{L}$ and is not PPI. $Q_3 \geq 0.4574$ (for qutrits),  $NS= S = 1$. 


{\em Bell four-partite facets for smells.--} We now turn to the fourpartite Bell scenario for smells , where the saturating number of outcomes, as per Theorem \ref{Th1}, is $2(m+1)$. We focus exclusively on the simplest four-partite scenario, $(4,2,2)$, with  binary outputs, which is significantly  lower than the saturating number of outcomes for this case, i.e. \(6\). We find 6.5 million classes of facets in this scenario. Interestingly, around 5 million of these classes are also facets of the standard local Bell polytope. 
\begin{table}[H]
    \centering
    \begin{small}
        \begin{tabular}{ |c|c|c|c| } 
            \hline
            Scenario & Classes & $\mathcal{L}$-facets & PPI \\
            \hline
            $(4,2,2)$ & $\sim 6.5 \cdot  10^6$  & $\sim 5\cdot 10^6 $& 0\\
            \hline
        \end{tabular}
    \end{small}
    \caption{For the $(4,2,2)$ Bell scenario for smells, we provide the number of classes, $\mathcal{L}$-facet classes, and PPI inequalities.}
    \label{tab:scenario_4xy}
\end{table}

\subsection{Multipartite Unanimous Bell Inequalities} \label{app:unanimous_facets}
 Since the saturation of the number of vertices interesting and useful structural property, and unanimous inequalities are always saturated in \((n,m,m+1)\), one can leverage this property to try to find Bell inequalities in scenarios with large numbers of settings and outcomes. Since in the bipartite case the scenario is equivalent to the Bell inequalities for smells, we draw our focus to the multipartite case. We find unanimous Bell inequalities up to scenarios with five parties. The drawback is that finding unanimous facets seems to be a less promising strategy for finding facets of the standard local polytope  \(\mathcal{L}\) in the standard scenario (compared to taking all partitions as in multipartite scenarios for smells).

Since, in general a facet of the Bell polytope for smells is not a facet of the corresponding local polytope, and this happens to be even truer for unanimous inequalities, we will want to quantify ''how close'' an inequality is to being a facet. We do so by employing the notion of \textit{facetness}, which for a given inequality is the dimension of the space engendered by its saturating vertices divided by the dimension of that space for a facet (that is, $d_P-1$, where $d_P$ is the dimension of the polytope in question). For example, an inequality saturated by a single vertex has facetness $0$ while for a facet-defining inequality the facetness is $1$.

The results are presented in Table \ref{tab:scenario_3xy_extended}. 

\begin{table}[h]
    \centering
    \begin{small}
        \begin{tabular}{ |c|c|c|c|c| } 
            \hline
            Scenario & Classes & $\mathcal{L}$-facets & Max-Facetness & PPI \\
            \hline
            $(3,2,2)$ & 3 & 0 & 19/26 & 1 \\   
            $(3,2,3)$ & 7 & 0 & 115/123 & 2 \\
            \hline
            $(3,3,2)$ & 27190$^*$ & 0 & 57/62 & 1 \\
            \hline
            $(4,2,2)$ & 371 & 3 & 79/79 & 12 \\
            $(4,2,3)$ & 583 & 0 & 567/623 & 9 \\
            \hline
            $(5,2,2)$ & $\geq 3.10^6$ & 0 & 174/241 & 5 \\
            \hline
        \end{tabular}
    \end{small}
\caption{For each $(n,m,k)$ in the unanimous scenario, we provide the number of classes, $\mathcal{L}$-facet classes,  PPI inequalities, and the maximum facetness we find in the scenario. We do not count positivity in the classes. The asterisk indicates scenarios for which we conjecture that the list of classes of facets obtained is complete}
\label{tab:scenario_3xy_extended}
\end{table}

Out of the previous,  we pick some  some interesting examples of unanimous facets. Namely, in the \((4,2,2)\) scenario, we find the only three unanimous facets which also happen to be facets of the local polytope. They are represented in Table \ref{tab:matrix2}. 

\begin{table*}[t]
\centering
\small
\setlength{\tabcolsep}{3pt}
\[
\begin{array}{*{16}{c}}
\hline 
 0000 & 0001 & 0010 & 0011 & 0100 & 0101 & 0110 & 0111 & 1000 & 1001 & 1010 & 1011 & 1100 & 1101 & 1110 & 1111 \\ \hline
  1 & -1 & -1 & 1 & -1 & 1 & 1 & -1 & -1 & -1 & -1 & -1 & -1 & -1 & -1 & -1 \\ \hline
  1 & -1 & -1 & 1 & -1 & -1 & -1 & -1 & -1 & -1 & -1 & -1 & -1 & 1 & 1 & -1 \\ \hline
  1 & -1 & -1 & 1 & -1 & 1 & -1 & -1 & -1 & 1 & -1 & -1 & -1 & -1 & 1 & -1 \\ \hline
\end{array}
\]

\caption{Four-partite unanimous facets in scenario \((4,2,2)\). The first line is the values of the four inputs $x,y,z,w$, with corresponding coefficients $\beta_{xyzw}$ below (each line thus defining one inequality). These are the only ones out of the 371 facets we find in this scenario which also are facets of the local polytope \(\mathcal{L}_{\Sigma}\)  }
\label{tab:matrix2}
\end{table*}

\section{Genuine multipartite nonlocality} \label{app:GMNL}

Here we focus on the definition of genuine multipartite nonlocality (GMNL). Let us take the example of $n=3$ parties, the straightforward extension of the local polytope is as follows: 

\begin{equation}
    \begin{aligned}
            & p_L(a,b,c|x,y,z) = \\
    & \int_{\Lambda} \Pi(\lambda) p_A(a|x,\lambda) p_B(b|y,\lambda) p_C(c|z,\lambda) d\lambda
    \end{aligned}
\end{equation}

where $\lambda$ is the local variable (which follows distribution $\Pi(\lambda)$), $p_A(a|x,\lambda), p_B(b|y,\lambda), p_C(c|z,\lambda)$ the local response functions. (the extension to more than 3 parties is straightforward). In that case, behavior $p_L(a,b,c|x,y,z)$ is called fully tripartite local.

But different structures can be considered, for instance 

\begin{equation}
    p_{AB|C} = \int_{\Lambda} \Pi(\lambda)  p_{AB}(a,b|x,y,\lambda) p_C(c|z,\lambda) d\lambda
\end{equation}

where Alice and Bob can act together, so it corresponds to the bipartition $A B$ vs $C$. More generally, we can consider partitions where the grouped parties have access to different resources \cite{Svetlichny1987Multi,Bancal2013GMNL}.

If we consider convex combinations of all different bipartitions, we get the definition of genuine multipartite nonlocality, which in the example of 3 parties is a behavior $p(a,b,c|x,y,z)$ which cannot be written as

\begin{align*}
    p_{\text{bilocal}} = q_1 \int_{\Lambda} \Pi_1(\lambda)  p_{AB}(a,b|x,y,\lambda) p_C(c|z,\lambda) d\lambda \\
    + q_2 \int_{\Lambda} \Pi_2(\lambda)  p_{AC}(a,c|x,z,\lambda) p_B(b|y,\lambda) d\lambda  \\
    + q_3 \int_{\Lambda} \Pi_3(\lambda)  p_{BC}(b,c|y,z\lambda) p_A(a|x,\lambda) d\lambda  
\end{align*}

 and one can consider different resources for the grouped parties, that is, $p_{AB}(a,b|x,y,\lambda)$ can be imposed to be quantum behaviors (for all $\lambda$), or no-signaling ones, or a one-way no-signaling, etc. In this work, we considered all grouped resources to be arbitrary no-signaling behaviors. That is, we compute the bound for GMNL $L_{\text{bilocal-NS}}$ by optimizing over all behaviors of the form 

\begin{align*}
    p_{\text{bilocal}} = q_1 \int_{\Lambda} \Pi_1(\lambda)  p_{AB}^{(NS)}(a,b|x,y,\lambda) p_C(c|z,\lambda) d\lambda \\
    + q_2 \int_{\Lambda} \Pi_2(\lambda)  p_{AC}^{(NS)}(a,c|x,z,\lambda) p_B(b|y,\lambda) d\lambda  \\
    + q_3 \int_{\Lambda} \Pi_3(\lambda)  p_{BC}^{(NS)}(b,c|y,z\lambda) p_A(a|x,\lambda) d\lambda  
\end{align*}

\section{All unanimous inequalities are nonlocal games}
\label{section:All unanimous inequalities are nonlocal games}

Here we show that unanimous Bell inequalities can always be seen as deterministic nonlocal games. We formalize this in the following lemma:

\begin{theorem}
    
\label{lemma: Unanimous Games}

Let $ I_{\mathrm{un}}$ be an arbitrary unanimous Bell inequality:

\begin{equation}
  I_{\mathrm{un}} = \sum_{\mathbf{x}} \beta_{\mathbf{x}}\, p(=_{ABC...} | \mathbf{x}) \le L_{\mathrm{un}},
\end{equation}

The following (standard) Bell inequality 

\begin{equation}
\begin{split}
      G_{\mathrm{un}} = \sum_{\mathbf{a},\mathbf{x}} \mu(\mathbf{x})\, V(\mathbf{a|\mathbf{x}}) \, p(\mathbf{a} | \mathbf{\mathbf{x}}) \\ \le \frac{1}{\sum_{\mathbf{x}} \abs{\beta_{\mathbf{x}}}} ( L_{\mathrm{un}} - \sum_{\mathbf{x}|\beta_\mathbf{x}<0} \beta_x )
    \end{split}
\end{equation}

defines a deterministic nonlocal game, equivalent to unanimous Bell inequality $I_{\mathrm{un}}$, and with prior $\mu(\mathbf{x}) = \frac{\abs{\beta_{\mathbf{x}}}}{\sum_{\mathbf{x}} \abs{\beta_{\mathbf{x}}}}$, and predicate 

\[
V(\mathbf{a},\mathbf{x}) = 
    \begin{cases}
       \delta_{a_1 , a_2 , a_3, ... }   &  \mathbf{x} |\beta_{\mathbf{x}} \geq 0 \\
       1 -   \delta_{a_1 , a_2 , a_3, ... }  & \mathbf{x} | \beta_{\mathbf{x}} < 0
     \end{cases}
\]

\end{theorem}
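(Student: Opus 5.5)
The plan is to show that $G_{\mathrm{un}}$ is an affine rescaling of $I_{\mathrm{un}}$ with a positive slope. Then the two inequalities are satisfied by exactly the same behaviors, and the local bound of one translates directly into the local bound of the other. Throughout, write $S=\sum_{\mathbf{x}}\abs{\beta_{\mathbf{x}}}$, and assume $S>0$, since otherwise the inequality is empty.

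First, I check that $(\mu,V)$ really defines a deterministic nonlocal game. The weights $\mu(\mathbf{x})=\abs{\beta_{\mathbf{x}}}/S$ are nonnegative and sum to one, so $\mu$ is a valid prior. The predicate $V$ takes values in $\{0,1\}$ because $\delta_{a_1,a_2,\dots}$ does.

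Next, I express $G_{\mathrm{un}}$ in terms of the unanimous probabilities. For a fixed $\mathbf{x}$, we have $\sum_{\mathbf{a}}\delta_{a_1,a_2,\dots}\,p(\mathbf{a}|\mathbf{x})=p(=_{ABC\dots}|\mathbf{x})$. By normalization of $p(\cdot|\mathbf{x})$, we also have $\sum_{\mathbf{a}}(1-\delta_{a_1,a_2,\dots})\,p(\mathbf{a}|\mathbf{x})=1-p(=_{ABC\dots}|\mathbf{x})$. Splitting the sum over $\mathbf{x}$ according to the sign of $\beta_{\mathbf{x}}$ then gives
\begin{equation*}
S\,G_{\mathrm{un}}=\sum_{\mathbf{x}|\beta_{\mathbf{x}}\geq 0}\beta_{\mathbf{x}}\,p(=_{ABC\dots}|\mathbf{x})+\sum_{\mathbf{x}|\beta_{\mathbf{x}}<0}(-\beta_{\mathbf{x}})\bigl(1-p(=_{ABC\dots}|\mathbf{x})\bigr).
\end{equation*}
The right-hand side equals $I_{\mathrm{un}}-\sum_{\mathbf{x}|\beta_{\mathbf{x}}<0}\beta_{\mathbf{x}}$. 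This gives the identity
\begin{equation*}
G_{\mathrm{un}}=\frac{1}{S}\Bigl(I_{\mathrm{un}}-\sum_{\mathbf{x}|\beta_{\mathbf{x}}<0}\beta_{\mathbf{x}}\Bigr),
\end{equation*}
and it holds for every normalized behavior $p(\mathbf{a}|\mathbf{x})$, not just local ones.

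Since the map $t\mapsto (t-c)/S$ with $S>0$ is strictly increasing, $I_{\mathrm{un}}\le L_{\mathrm{un}}$ holds if and only if $G_{\mathrm{un}}\le (L_{\mathrm{un}}-\sum_{\mathbf{x}|\beta_{\mathbf{x}}<0}\beta_{\mathbf{x}})/S$. The maximum of $G_{\mathrm{un}}$ over the local set is therefore exactly the stated bound, which makes it tight, and local, quantum and no-signalling values are mapped into each other by the same affine map. This is the sense in which the game is equivalent to the inequality.

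No real obstacle is expected. The only point needing care is the normalization step that turns $1-\delta$ into $1-p(=|\mathbf{x})$. This step requires full behaviors $p(\mathbf{a}|\mathbf{x})$ rather than the reduced behaviors in $\mathcal{P}_\Sigma$, which is fine because every reduced behavior arises from a full one. One should also note that the inputs with $\beta_{\mathbf{x}}=0$ receive zero prior and contribute nothing to either side of the identity.
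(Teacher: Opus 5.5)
Your proposal is correct and is essentially the paper's proof: both rest on the affine identity $G_{\mathrm{un}} = \bigl(I_{\mathrm{un}} - \sum_{\mathbf{x}|\beta_{\mathbf{x}}<0}\beta_{\mathbf{x}}\bigr)/\sum_{\mathbf{x}}\abs{\beta_{\mathbf{x}}}$, obtained by trading $p(=_{ABC\dots}|\mathbf{x})$ for $1-p(\neq_{ABC\dots}|\mathbf{x})$ on the negative-coefficient inputs and then renormalizing (the paper runs the computation from $I_{\mathrm{un}}$ to $G_{\mathrm{un}}$ rather than in your direction). Your remarks on the monotonicity of the affine map, on zero-weight inputs, and on full versus reduced behaviors are harmless clarifications that the paper leaves implicit.
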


\begin{proof}

A (multipartite) deterministic nonlocal game is a functional $G(\mathbf{a},\mathbf{x)}$ that can be decomposed as 

\[
G(\mathbf{a},\mathbf{x)} = \mu(\mathbf{x}) \, V(\mathbf{a},\mathbf{x})
\]

with $\mu(\mathbf{x}) \geq 0$ a distribution over the set of inputs, thus satisfying $\sum_{\mathbf{x}} \mu(\mathbf{x}) = 1$, and $V(\mathbf{a},\mathbf{x}) \in \{0,1\}$. 

The corresponding Bell functional

\begin{equation} \label{Game_inequality}
  I_{G}(p)= \sum_{\mathbf{x}} G(\mathbf{a},\mathbf{x)} \, p(\mathbf{a} | \mathbf{x} )
\end{equation}

can be seen as a game where players receive input configuration $\mathbf{x}$ with probability $\mu(\mathbf{x)}$, and win (score $1$ point) if they answer some $\mathbf{a}$ such that $V(\mathbf{a},\mathbf{x}) = 1$, lose otherwise (score $0$ point). $\mu(\mathbf{x)}$ is called the prior distribution of the inputs, and $V(\mathbf{a},\mathbf{x})$ is called the predicate of the game. Equation \eqref{Game_inequality} is thus the average score at that game for players following a strategy corresponding to behavior $p(\mathbf{a} | \mathbf{x} )$. 

Deterministic nonlocal games are an interesting and simple subclass of generic nonlocal games, which constitute another (less geometrical) approach for Bell nonlocality not based on characterizing separating hyperplanes \cite{araujo2020bell}. 

In order to transform a generic unanimous inequality into a deterministic game, we write all terms $p(=_{ABC...} | \mathbf{x})$ which have negative coefficients in the inequality as $1 - p( \, \neq_{ABC...}| \mathbf{x})$, leading to an inequality with only positive coefficients, and a global shift of $ \sum_{\mathbf{x}|\beta_\mathbf{x}<0} \beta_x$:

\begin{equation}
\begin{split}
 & I_{\mathrm{un}} = 
 \\ & \sum_{\mathbf{x}| \beta_{\mathbf{x}} > 0}  \beta_{\mathbf{x}} \, p(=_{ABC...} | \mathbf{x}) +  \sum_{\mathbf{x}| \beta_{\mathbf{x}} < 0} \beta_{\mathbf{x}} \, (1 - p( \,\neq_{ABC...} | \mathbf{x}))   
 \\ & = \sum_{\mathbf{x}| \beta_{\mathbf{x}} > 0}  \beta_{\mathbf{x}} \, p(=_{ABC...} | \mathbf{x}) +  \sum_{\mathbf{x}| \beta_{\mathbf{x}} < 0} \abs{\beta_{\mathbf{x}}} \,  p( \,\neq_{ABC...} | \mathbf{x})   
 \\ & + \sum_{\mathbf{x}|\beta_\mathbf{x}<0} \beta_x
  \end{split}
\end{equation}

and thus we can define the "shifted" inequality $I'_{\mathrm{un}}$ as

\begin{equation}
\begin{split}
 &I'_{\mathrm{un}} = I_{\mathrm{un}} -  \sum_{\mathbf{x}|\beta_\mathbf{x}<0} \beta_x= \\
 & \sum_{\mathbf{x}| \beta_{\mathbf{x}} > 0}  \beta_{\mathbf{x}} \, p(=_{ABC...} | \mathbf{x}) +  \sum_{\mathbf{x}| \beta_{\mathbf{x}} < 0} \abs{\beta_{\mathbf{x}}} \,  p( \,\neq_{ABC...} | \mathbf{x})   
  \end{split}
\end{equation}

Now we can see this inequality $I'_{\mathrm{un}}$ as a game where for each input configuration the players can score (proportionally to) $\beta_{\mathbf{x}}$ if either (1) $\beta_{\mathbf{x}} > 0$ and they all output the same outcome (i.e. $a_1 = a_2 = a_3...$), or (2) $\beta_{\mathbf{x}} < 0$ and at least two players output different outcomes (i.e. $ \neg (a_1 = a_2 = a_3 =....)$); they score $0$ in all other cases. That is, the winning condition on inputs corresponding to positive coefficients is that all parties output the same outcomes, and its negation for inputs corresponding to negative coefficients:

\[
V(\mathbf{a},\mathbf{x}) = 
    \begin{cases}
       \delta_{a_1 , a_2 , a_3, ... }   &  \mathbf{x} |\beta_{\mathbf{x}} \geq 0 \\
       1 -   \delta_{a_1 , a_2 , a_3, ... }  & \mathbf{x} | \beta_{\mathbf{x}} < 0
     \end{cases}
\]

We can now divide $I'_{\mathrm{un}}$ by a well-chosen constant to make the resulting functional a valid deterministic game : 

\begin{equation}
 \begin{split}     
\frac{I'_{\mathrm{un}} }{\sum_{\mathbf{x}} \abs{\beta_{\mathbf{x}}}}   &=
 \frac{1}{\sum_{\mathbf{x}} \abs{\beta_{\mathbf{x}}}} \Big(\sum_{\mathbf{x}| \beta_{\mathbf{x}} > 0}  \beta_{\mathbf{x}} \, p(=_{ABC...} | \mathbf{x})   \\ &+\sum_{\mathbf{x}| \beta_{\mathbf{x}} < 0} \abs{\beta_{\mathbf{x}}} \, p( \, \neq_{ABC...} | \mathbf{x}) \,\Big) \\
&= \sum_{\mathbf{a},\mathbf{x}}  \frac{\abs{\beta_{\mathbf{x}}}}{\sum_{\mathbf{x}} \abs{\beta_{\mathbf{x}}}} \, V(\mathbf{a}|\mathbf{x}) \, p(\mathbf{a} | \mathbf{x} ) \\
&= \sum_{\mathbf{a},\mathbf{x}}  \mu(\mathbf{x}) \, V(\mathbf{a}|\mathbf{x}) \, p(\mathbf{a} | \mathbf{x} )
\end{split}
\end{equation}

 where by construction $\mu(\mathbf{x}) = \frac{\abs{\beta_{\mathbf{x}}}}{\sum_{\mathbf{x}} \abs{\beta_{\mathbf{x}}}}$ is positive and normalized, defining a valid deterministic nonlocal game. 

 Thus, an arbitrary unanimous inequality is transformed into a deterministic nonlocal game by adding a constant and then renormalizing it: 

\[
I_{\mathrm{un}} \rightarrow   \frac{1}{\sum_{\mathbf{x}} \abs{\beta_{\mathbf{x}}}}  (I_{\mathrm{un}} - \sum_{\mathbf{x}|\beta_\mathbf{x}<0} \beta_x )
\]

since adding arbitrary constants and multiplying by a non-zero scalar does not change the property of an inequality (it only changes the local bound), we see that all unanimous inequalities correspond to deterministic nonlocal games. 

\end{proof}

\section{Unanimous multipartite family: \\ Proof of local bound and some quantum violations}
\label{app:local_bound_family}

\begin{theorem}
\label{Family Game Form}
   Call \(G_{N2}\) an \(N-\)partite family of nonlocal games in the scenario \((N,2,3)\), that is, having input \(\textbf{s}= (s_1, ..., s_N)\in \{0,1\}^N\), output \(\textbf{k}=( k_1, ..., k_N)\in \{0,1,2\}^N\) such that, 
    
    \[\mu(\textbf{s})= \frac{1}{2^N};\]
    \[ V(\textbf{k}\;|\textbf{s}) = \left[\bigoplus_{i=1}^n s_i = 1- \delta(k_1, ... , k_N)\right].\]

    The classical value for this family as a function of \(N\) is  for \(E(N) := [N \equiv 0 \pmod{2}]\)\[\omega_c(N) = \frac{1}{2}+\frac{2^{E(N)}}{2^N}.\]
\end{theorem}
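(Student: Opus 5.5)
The plan is to reduce the game value to a signed count over deterministic strategies and then bound that count by a factorisation argument. Since $\mu$ is uniform and the game is linear in $p$, the classical value $\omega_c(N)$ is attained by a deterministic strategy. Such a strategy is a pair of outputs $(a_i(0),a_i(1))\in\{0,1,2\}^2$ for each party $i$. For a fixed strategy, let $A(\mathbf{s})=\delta(a_1(s_1),\dots,a_N(s_N))$. Input strings of even parity win exactly when $A(\mathbf{s})=1$, and strings of odd parity win exactly when $A(\mathbf{s})=0$. There are $2^{N-1}$ strings of odd parity, so the score is
\begin{equation*}
\frac{1}{2^N}\Big(2^{N-1}+\sum_{\mathbf{s}}(-1)^{s_1\oplus\cdots\oplus s_N}A(\mathbf{s})\Big)=\frac12+\frac{D}{2^N},
\end{equation*}
where $D$ is the signed sum above. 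This is the same functional as $\mathcal{F}_{N2}$ in Eq.~\eqref{Unanimous Family}, so it remains to show that $\max D=2^{E(N)}$.

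The key step is to factorise $D$ by outcome value. Write $A(\mathbf{s})=\sum_{c}\prod_i[a_i(s_i)=c]$. The sign $(-1)^{\oplus s_i}$ also factorises over parties, so
\begin{equation*}
D=\sum_{c\in\{0,1,2\}}\prod_{i=1}^N \sigma_i(c),\qquad \sigma_i(c)=[a_i(0)=c]-[a_i(1)=c].
\end{equation*}
Each $\sigma_i(c)$ lies in $\{-1,0,1\}$. It is nonzero exactly when party $i$ outputs different values on its two inputs and $c$ is one of them: $\sigma_i(c)=+1$ if $c=a_i(0)$ and $\sigma_i(c)=-1$ if $c=a_i(1)$. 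Hence outcome $c$ contributes $\pm1$ to $D$ only if every party assigns $c$ to exactly one of its inputs, and contributes $0$ otherwise.

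Next, case on how many outcomes contribute.
\begin{itemize}
\item Each party uses only two distinct values, so at most two outcomes can contribute.
\item If exactly one outcome contributes, then $D=\pm1$.
\item If two outcomes $c\neq c'$ both contribute, every party's output pair is $\{c,c'\}$. Then $\sigma_i(c')=-\sigma_i(c)$ for every $i$, which gives $D=\prod_i\sigma_i(c)\,(1+(-1)^N)$. This equals $\pm2$ for $N$ even and $0$ for $N$ odd.
\end{itemize}
So $D\le 2$ for $N$ even and $D\le 1$ for $N$ odd, i.e. $D\le 2^{E(N)}$.

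Finally, exhibit strategies attaining these bounds.
\begin{itemize}
\item For $N$ even, take $a_i(s)=s$ for every party. Then $\sigma_i(0)=1$ and $\sigma_i(1)=-1$ for all $i$, so $D=1+(-1)^N=2$.
\item For $N$ odd, let every party output $0$ on input $0$. Let parties $1,\dots,N-1$ output $1$ on input $1$, and let party $N$ output $2$ on input $1$. Only $c=0$ contributes, with all $\sigma_i(0)=+1$, so $D=1$.
\end{itemize}
The odd construction is where the third outcome is used. With two outcomes the unique contributing outcome would force the second outcome to contribute as well, giving $D=0$, consistent with the trivial case $(3,2,2)$.

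The main obstacle is the factorisation step, namely seeing that the parity-signed indicator of unanimity splits into per-outcome products. Once that is written down, the rest is the short case analysis above.
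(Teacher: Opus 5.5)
Your proof is correct, but it reaches the bound by a different route from the paper. The paper works directly with the set $A$ of unanimous input strings and splits into cases according to constant coordinates. Two or more distinct constant trits force $A=\varnothing$. A single constant trit lets it pair $s$ with $s\oplus 1_i$, giving $|E\cap A|=|O\cap A|$. With no constant coordinate, each trit is realised by at most one string, so $|A|\le 2$, and two elements of $A$ must be bitwise complements, whose parities agree exactly when $N$ is even.

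Your identity $D=\sum_c\prod_i\sigma_i(c)$ compresses all of this. A constant party has $\sigma_i(c)=0$ for every $c$, which disposes of the paper's first two cases in one line; the bit-flip pairing is just the combinatorial face of $\sum_{s_i}(-1)^{s_i}=0$. The relation $\sigma_i(c')=-\sigma_i(c)$ is the algebraic form of the complement-parity argument.

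Your version buys three things:
\begin{itemize}
\item The upper bound visibly never uses $k=3$, since at most two outcomes can contribute for any alphabet size. It therefore proves $D\le 2^{E(N)}$ for every $(N,2,k)$ at once, whereas the paper appeals to outcome saturation to extend the result from trits.
\item It shows immediately that for $k=2$ and $N$ odd every deterministic strategy has $D=0$, which explains why the third outcome is needed.
\item The same factorisation applies to any unanimous functional with product coefficients $\beta_{\mathbf{x}}=\prod_i\beta_i(x_i)$.
\end{itemize}

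The paper's argument, in exchange, is entirely elementary and stays at the level of the unanimous set $A$ itself. It therefore directly exhibits what $A$ looks like for an optimal strategy: a single even string when $N$ is odd, and a complementary pair of even strings when $N$ is even.
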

\begin{proof}First, let us state that from Fine's theorem deterministic assignments from inputs to outputs are sufficient to achieve the local bound. 
 For any integer $N\ge 2$  define the \(N\)-bit string $s=(s_1,\dots,s_N)\in\{0,1\}^N$. For each index $i\in[N]$ specifying the bit \(s_i\) in the string, let $f_i:\{0,1\}\to\{0,1,2\}$, be a map that assigns to each bit-value a trit-value, such that, the output \(\textbf{k}= (k_1, ..., k_N) = (f_1(s_1), ..., f_N(s_N))\). Let us define the following string subsets:

\begin{equation*}
    \begin{split}
        &E:=\{ s \;| \ \bigoplus_i s_i = 0\} \subset \{0,1\}^N. \\
        & O:=\{s \;|\ \bigoplus_i s_i = 1\} \subset \{0,1\}^N.
    \end{split}
\end{equation*}
i.e. $E$ (resp.\ $O$) is the set of even (resp.\ odd) parity bitstrings. Define also the following sets for \({t}\in \{0,1,2\}\)
\[
A_t:=\{s\;|\;f_1(s_1)=f_2(s_2)=\cdots=f_N(s_N) = t\}.
\] such that  \( A = A_0 \cup A_1 \cup A_2\). 

In set notation, the predicate demands that \( s\in E \Leftrightarrow s\in A \Leftrightarrow s\in E \cap A\)  or alternatively (by its contrapositive), \( s\in O \Leftrightarrow s\notin A \Leftrightarrow s \in O\setminus A \). Then define \(S\) as the number of strings satisfying  either, that is, 
\begin{equation*}
\begin{split}
   S:=&|E\cap A|\;+\;|O\setminus A| \\ =& 2^{N-1}+\bigl(|E\cap A|-|O\cap A|\bigr)
=2^{N-1}+D.
\end{split}
\end{equation*}

Since the game is defined for uniform inputs, we have
\[\omega_c(N) = \frac{S}{2^N}= \frac{2^{N-1}+D}{2^N} =  \frac{1}{2}+\frac{D}{2^N}.\]

We will bound \(D\) as shown, 

\[
 D \le \left\{
\begin{array}{l}
 1,\; \text{for}\; N\text{-odd} \\[2pt]
2,\; \text{for}\; N\text{-even}
\end{array}
\right.\]
and give a construction to show that the bounds are sharp.

 A trit $t\in\{0,1,2\}$ is constant at coordinate $i$ if $f_i(0)=f_i(1)=t$, and if a trit is constant in at least one coordinate $i$, we say that it is constant in the string. Given a string, there will necessarily exist   \(3,2,1\) or \(0\) constant trits in the string.  We will bound \(D\) for all of the cases which exhaust all possibilities.

\textit{Case 1: Two or three trits constant in the string}\\
Suppose there exist at least two distinct constant trits in the string,  $t\neq t'$ and coordinates $i\neq j$ with $f_i(0)=f_i(1)=t$ and $f_j(0)=f_j(1)=t'$. Then for every $s$ the outputs at $i$ and $j$ are
$t$ and $t'$ respectively, so no $s$ can have all outputs equal. Hence $A=\varnothing$, so
$D=0$ and $S=2^{N-1}$.

\textit{Case 2: One trit constant in the string}\\
Assume there is a constant trit $t^*$ for coordinates $i\in I$ with $f_i(0)=f_i(1)=t^\ast$, and no other trit
is constant anywhere. Then any $s\in A$ must have common value $t^\ast$, and $s\in A\Leftrightarrow s\oplus 1_i\in A$ (flipping the $i$-th bit preserves all outputs). Thus $A$ is a disjoint union of
pairs $\{s,s\oplus 1_i\}$ for all \(i\in I\) corresponding to one even and one odd string, so $|E\cap A|=|O\cap A|$ and
$D=0$. Therefore $S=2^{N-1}$.

\textit{Case 3: No trit constant in the string}\\
In this case for every coordinate $i$ we have $f_i(0)\neq f_i(1)$, so that if $s$ belongs to $A_t$,  necessarily $t\in\{f_i(0),f_i(1)\}$ for all $i$. It follows immediately that \(|A| \leq 2\). Furthermore, for a given \(t\), either \(f_i(0)=t \Rightarrow s_i=0\), or  \(f_i(1)=t \Rightarrow s_i=1\), meaning that there is at most one $s$ with all outputs $t$. Thus we have the following options: 
\begin{enumerate}
    \item $A=\varnothing$ yielding $D=0$;
    \item  $A=\{s\}$ yielding $D=+1$ if $s$ is even and $D=-1$ if $s$ is odd;
    \item $A=\{s^{(t)},s^{(t')}\}$ for distinct realizable trits $t\neq t'$, such that, $s^{(t')}=s^{(t)}\oplus 1^N$ are bitwise complements. Then
    \begin{enumerate}
                \item  For \(N-\)odd, if \(s^{(t')}\) is even/odd then \(s^{(t')}\) is odd/even, yielding \(|E\cap A|=|O\cap A|= 1\) such that \(D=0\);
        \item  For \(N-\)even, if \(s^{(t')}\) is even/odd then \(s^{(t')}\) is even/odd, and , yielding \(|E\cap A|=2, |O\cap A|= 0\) and \(|E\cap A|=0, |O\cap A|= 2\), respectively giving \(D=2\) and \(D=-2\).
    \end{enumerate}
\end{enumerate}

From the previous we can establish that for \(N\)-odd \(D\leq 1\) and for \(N\)-even \(D\leq 2\). We now show how one can achieve these bounds. First for the odd-case, fix any trit e.g. $t=0$ and set $f_i(0)=0$ for all $i$, while choosing the values
$f_i(1)\neq 0$ so they are not all equal (e.g.\ $f_1(1)=\cdots=f_{n-1}(1)=1$, $f_n(1)=2$).
Then $A=\{0^n\}$, which is even, so $D=1$ and therefore $S=2^{n-1}+1$. For the even-case, fix $t_0$ and $t_1$ and set $\{f_i(0)=t_0,f_i(1)=t_1\}$ for all $i$, such that $A=\{0^N, 1^N\}$, which are both even parity strings, since \(n\) is even, thus giving $S=2^{N-1}+2$.

Finally, substituting back to the classical value we get  for \(E(N) := [N\equiv 0 \pmod{2}]\)\[\omega_c(N) = \frac{1}{2}+\frac{2^{E(N)}}{2^N}.\]

Notice also, that although the previous classical bound was obtained by considering explicitly trits as the players' outputs, that is, the scenario in question is \((N,2,3)\), since we are dealing with a unanimous inequality and from Lemma \ref{lemma: Unanimous Games} we know that the local bound will be constant for \(k \geq m+1\) outputs, then the previous result is valid for \((N,2,k)\) with \( k\geq 3\).

\end{proof}

We will transform the previous local bound for the family (that is, the classical value of the family seen as a unanimous nonlocal game (\(G_{N2}\))  into the local bound corresponding to the form of the family (\(\mathcal{F}_{N2}\)) as it appears it Eq. \eqref{Unanimous Family}. We will do so by using the generic unanimous inequality/game transformation defined in Appendix \ref{section:All unanimous inequalities are nonlocal games}. First, we write all terms $p(=_{ABC..} |x_1 x_2 x_3... ) $ which have negative coefficients in the inequality as $1 -  p(\neq_{ABC..} |x_1 x_2 x_3... ) $, leading to an inequality with only positive coefficients (and a global shift of $-\sum_{\mathbf{x}|\beta_{\mathbf{x}} < 0} \beta_x$, here simply equal to half the number of the terms: $-2^{N-1}$), which correspond to the (relative) values of the prior $\mu(x_1,x_2,..)$ and the predicate  presented previous in Theorem \ref{Family Game Form} (that is, the winning condition on inputs corresponding to positive coefficients is that all parties output the same outcomes, and its negation for inputs corresponding to negative coefficients). Finally, we just have to renormalize the prior, by dividing everything by the sum of all coefficients. 
Therefore, the score of the game on any behavior $p$ is given by
\[
\mathcal{F}_{N2}(p) = 2^N G_{N2}(p)  - 2^{N-1} = 2^N (G_{N2}(p)  - 1/2)
\]

Hence, the local bound of the game given in Theorem \ref{Family Game Form} implies
\begin{equation}
    \begin{split}
        \mathcal{F}_{N2}(p_{\text{local}}) &\leq 2^N \left(1/2 + \frac{2^{E(N)}}{2^N}  - 1/2\right) = 2^{E(N)} \\ &= 1 + (N+1)_{\text{mod} \, 2}
    \end{split}
\end{equation}

\emph{Quantum violations ---}
Let us write explicitly the simplest useful instance of $\mathcal{F}_{N2}$, which is $\mathcal{F}_{32}$:
\begin{equation} \begin{split}
\mathcal{F}_{32}  = &p(=_{ACB}|000 ) - p(=_{ACB}|001 )\\&- p(=_{ACB}|010)  + p(=_{ACB}|011)  \\&- p(=_{ACB}|100 ) + p(=_{ACB}|101 )\\&+  p(=_{ACB}|110) ) - p(=_{ACB}|111 )
     \leq  1
\end{split} 
\end{equation}

Optimizing over quantum strategies for 3-qubit and 3-qutrit states we find: 
\begin{equation}
  Q_2(\mathcal{F}_{32})  \geq 1.0688;\; 
  Q_3(\mathcal{F}_{32}) \geq 1.2568.
\end{equation}

Similarly, we find quantum violations for $N=5$ parties, using $\mathcal{F}_{52}$, which satisfies $\mathcal{F}_{52} \leq 1$:
\begin{equation}
 Q_2(\mathcal{F}_{52})  \geq 1.0016; \; 
  Q_3(\mathcal{F}_{52}) \geq 1.2448.
\end{equation}

\section{No-signaling vertices of the unanimous scenario}
\label{app:nosignalling}

In this section we exhibit the vertices of the no-signaling polytope for arbitrary unanimous scenarios. The reason why such a construction is doable is due to the existence of a \emph{saturating scenario}, similar to the phenomena underlined in Theorem \ref{Th1} and Corollary \ref{cor:bipartite_saturated}. We focus on the notion of "single-party no-signaling", i.e. the weakest form of multipartite no-signaling.

\subsection{Bipartite scenarios }

In the bipartite case we can formulate the following theorem:
\begin{theorem} \label{Th:NS pseudo-telepathy bipartite}
    For any $(2,m,k)$ unanimous game $\mathcal{U}$, one has $NS(\mathcal{U}) = 1$, where $NS(\mathcal{U},)$ is the maximum value for game $\mathcal{U}$ allowed by no-signaling strategies. That is, no-signaling resources trivialize unanimous games. 
\end{theorem}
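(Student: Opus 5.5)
The plan is to exhibit, for an arbitrary bipartite unanimous game, a single no-signaling behavior that wins with certainty. By Theorem~\ref{lemma: Unanimous Games}, any $(2,m,k)$ unanimous inequality with coefficients $\beta_{xy}$ is equivalent to a deterministic game with prior $\mu(x,y)\propto|\beta_{xy}|$. Its predicate demands $a=b$ when $\beta_{xy}\ge 0$ and $a\neq b$ when $\beta_{xy}<0$. Since every game value is an average of predicate values in $\{0,1\}$, it suffices to build a no-signaling $p(a,b|x,y)$ with $\sum_{a,b}V(a,b|x,y)\,p(a,b|x,y)=1$ for every $(x,y)$. That gives $NS(\mathcal{U})=1$, which is also the algebraic (signaling) maximum.

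The construction is a generalized PR box that uses only two of the $k\ge 2$ outcomes. For inputs with $\beta_{xy}\ge 0$ set $p(a,b|x,y)=\tfrac12\,\delta_{a,b}$ for $a,b\in\{0,1\}$, and zero otherwise. For inputs with $\beta_{xy}<0$ set $p(a,b|x,y)=\tfrac12\,(1-\delta_{a,b})$ for $a,b\in\{0,1\}$, and zero otherwise. Each block is a normalized distribution supported entirely on winning outcome pairs, so the game is won with probability one for every input pair, whatever the sign pattern of the $\beta_{xy}$.

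The remaining step is to check no-signaling. For every $(x,y)$, both kinds of block have Alice's marginal $p(a|x,y)=\tfrac12$ for $a\in\{0,1\}$ and $0$ otherwise, and likewise for Bob's marginal. These marginals are independent of the other party's input, so the single-party no-signaling constraints hold trivially. Translating back through the affine map of Theorem~\ref{lemma: Unanimous Games}, the inequality $I_{\mathrm{un}}$ attains $\sum_{x,y|\beta_{xy}>0}\beta_{xy}$, its largest conceivable value. Hence $NS=S$ for every bipartite unanimous (equivalently, smells) inequality, which also explains $k^*_{NS}=2$.

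No step seems genuinely hard. The only point needing care is that the conditions are imposed per input pair with no consistency requirement across pairs, and it is precisely the uniform marginals that remove any such requirement. This is also why the argument does not carry over directly to the multipartite setting: there "not all equal" combined with other partition constraints could force nontrivial marginals. The degenerate case $k=1$ is excluded, since the scenario is trivial there.
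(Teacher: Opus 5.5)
Your proposal is correct and follows the paper's proof. Both use the same two-outcome behavior: perfectly correlated on $\{0,1\}$ for the input pairs that require equality, perfectly anticorrelated otherwise; it wins every round and is no-signaling because every marginal is uniform on $\{0,1\}$ regardless of the inputs.

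Your closing aside is slightly too pessimistic for purely unanimous games. The paper extends the same idea to $n$ parties by taking, on the inputs where not all outputs may be equal, the uniform distribution over the $2^n-2$ non-constant binary strings. That distribution still has uniform single-party marginals.
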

\begin{proof}
    The proof is constructive. We explicitly exhibit a 2-outcome no-signaling strategy achieving a perfect winning probability, for an arbitrary bipartite unanimous game. Let $V_U(a,b,x,y) \in \{0,1\}$ be an arbitrary predicate a of a bipartite unanimous game. That is,

\[
V_U(a,b,x,y) = 
    \begin{cases}
       \delta_{a,b}   &  (x,y) \in S_{=} \\
       1 -   \delta_{a,b}  & (x,y) \not{\in} S_{=} 
     \end{cases}
\]

where $S_{=}$ thus denotes the set of pairs $(x,y)$ for which the parties have to output the same outcomes. Consider the following $2-$outcome behavior:

\[
p_U(a,b|x,y) = \\
    \begin{cases}
       1/2 (\delta_{a=0,b=0} + \delta_{a=1,b=1})   &  (x,y) \in S_{=} \\
       1/2 (\delta_{a=0,b=1} + \delta_{a=1,b=0})   & (x,y) \not{\in} S_{=} 
     \end{cases}
\]

It is easy to see that $p_U(a,b|x,y)$ perfectly satisfies predicate $V_U(a,b,x,y)$, as it outputs only equal outcomes for $(x,y) \in S_{=}$ ($0,0$ with probability $1/2$ and $1,1$ with probability $1/2$), and only different outcomes for $(x,y) \not{\in} S_{=}$ . Therefore, for any valid prior $\mu(x,y)$ associated to predicate $V_U(a,b,x,y)$, behavior $p_U(a,b|x,y)$ achieves a perfect score (i.e. winning probability) of $1$.  Moreover, $p_U(a,b|x,y)$ is no-signaling as it contains only two kinds of probability distributions which are locally the same: 
\begin{align*}
&p_A(a|x) = \sum_b p_U(a,b|x,y) =\sum_b 1/2 (\delta_{a=0,b=0} + \delta_{a=1,b=1}) \\
& = \sum_b 1/2 (\delta_{a=0,b=1} + \delta_{a=1,b=0}) = 1/2; \\
&p_B(b|y) = \sum_a p_U(a,b|x,y) = \sum_a 1/2 (\delta_{a=0,b=0} + \delta_{a=1,b=1})\\
& = \sum_a 1/2 (\delta_{a=0,b=1} + \delta_{a=1,b=0}) = 1/2
\end{align*}
\end{proof}
Thus, since $2$ outcomes are always enough to saturate the no-signaling bound of arbitrary bipartite unanimous inequalities we can establish the following:
\begin{corollary}
    \label{Th:NS saturation bipartite}
    For any $(2,m,k)$ unanimous scenario, the no-signaling polytope $\mathcal{NS}_U(2,m,k)$ is saturated for $k=2$ outcomes. That is, $\mathcal{NS}_U$ obtained from $(2,m,k)$ coincides with that from $(2,m,2)$ for all $k \ge 2$.
\end{corollary}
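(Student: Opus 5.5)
The plan is to derive the corollary directly from Theorem \ref{Th:NS pseudo-telepathy bipartite}, using the fact that the no-signaling set is convex and is described by its extremal points, or equivalently by the maximal values of all linear functionals on it.

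\textbf{Step 1: the polytopes are nested.} For $k \ge 2$, every no-signaling behavior with 2 outcomes can be embedded into the $k$-outcome scenario by never using the extra outcomes. This embedding preserves no-signaling, and it also preserves every unanimous probability $p(=_{AB}|xy)$. Hence $\mathcal{NS}_U(2,m,2) \subseteq \mathcal{NS}_U(2,m,k)$. Both sets live in the same behavior space of dimension $m^2$, with coordinates $p(=_{AB}|xy) \in [0,1]$.

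\textbf{Step 2: the reverse inclusion.} It is enough to show that every extremal point of $\mathcal{NS}_U(2,m,k)$, viewed as a vector in $[0,1]^{m^2}$, already lies in $\mathcal{NS}_U(2,m,2)$. I will show something stronger: the 2-outcome set is the whole hypercube $[0,1]^{m^2}$, which trivially contains the $k$-outcome set.

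\textbf{Step 3: realizing the hypercube with 2 outcomes.} Take any subset $S_{=}$ of input pairs. The behavior $p_U$ built in the proof of Theorem \ref{Th:NS pseudo-telepathy bipartite} uses 2 outcomes, is no-signaling, and gives $p(=_{AB}|xy)=1$ for $(x,y)\in S_{=}$ and $0$ otherwise. So every $0/1$ vertex of the hypercube is realized by a 2-outcome no-signaling behavior. Convex combinations of no-signaling behaviors are no-signaling, and the map from behaviors to unanimous coordinates is linear. Therefore $\mathcal{NS}_U(2,m,2) = [0,1]^{m^2}$.

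\textbf{Step 4: conclusion.} Any $k$-outcome behavior has unanimous coordinates in $[0,1]$, so $\mathcal{NS}_U(2,m,k) \subseteq [0,1]^{m^2} = \mathcal{NS}_U(2,m,2)$. Combined with Step 1, the two polytopes coincide for all $k \ge 2$.

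An equivalent dual phrasing works through linear functionals. For any unanimous functional $\sum \beta_{xy}\, p(=_{AB}|xy)$, the maximum over either set equals $\sum_{\beta_{xy}>0}\beta_{xy}$. This value is attained by $p_U$ with $S_{=}=\{(x,y)\,:\,\beta_{xy}\ge 0\}$, and it is an upper bound for any behavior. Matching maxima for all functionals force equality of the closed convex sets.

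\textbf{Main obstacle.} The only real point is to confirm that the no-signaling structure imposes no constraint beyond the hypercube. Step 3 handles this by reusing the explicit construction from Theorem \ref{Th:NS pseudo-telepathy bipartite}, together with convexity.
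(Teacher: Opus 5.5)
Your proof is correct and follows the paper's route. Like the paper, it rests on the explicit two-outcome no-signaling behavior $p_U$ from the preceding bipartite theorem, which realizes every $0/1$ pattern of the $p(=_{AB}|xy)$ and hence attains the maximum $\sum_{\beta_{xy}>0}\beta_{xy}$ of every unanimous functional. Your proof also makes explicit the convexity step the paper leaves implicit: the projected no-signaling set is the full hypercube $[0,1]^{m^2}$ for every $k\ge 2$.
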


\subsection{Multipartite scenario}

Now generalizing to arbitrary multipartite scenarios we can formulate the following theorem:

\begin{theorem} \label{Th:NS pseudo-telepathy multipartite}
    For any $(n,m,k)$ unanimous game $\mathcal{U}$, one has single-$NS(\mathcal{U},) = 1$, where single-$NS(\mathcal{U})$ is the maximum value allowed by single-party no-signaling strategies.  
\end{theorem}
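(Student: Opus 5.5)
The plan is to mirror the constructive proof of Theorem~\ref{Th:NS pseudo-telepathy bipartite}. For an arbitrary $(n,m,k)$ unanimous game I would exhibit one explicit $2$-outcome behavior that wins with certainty and whose single-party marginals are identical for every input tuple. Single-party no-signaling requires only that each marginal $p_i(a_i|\mathbf{x})$ be independent of the inputs $x_j$, $j\neq i$. It therefore suffices to make every single-party marginal the uniform distribution $p_i(a_i|\mathbf{x}) = 1/2$ on $\{0,1\}$ for all $\mathbf{x}$.

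First, by Theorem~\ref{lemma: Unanimous Games}, I would write the game $\mathcal{U}$ through its predicate. Let $S_=$ be the set of input tuples with $\beta_{\mathbf{x}}\geq 0$, where the parties must all output the same value. On the complement of $S_=$, not all outputs may be equal. The prior $\mu$ is irrelevant, since winning for every $\mathbf{x}$ gives score $1$ for any prior. I would then define the behavior as follows:
\begin{equation*}
p_U(\mathbf{a}|\mathbf{x}) = \begin{cases} \tfrac12\left(\delta_{\mathbf{a},0^n} + \delta_{\mathbf{a},1^n}\right), & \mathbf{x}\in S_=,\\ \tfrac12\left(\delta_{\mathbf{a},0^{n-1}1} + \delta_{\mathbf{a},1^{n-1}0}\right), & \mathbf{x}\notin S_=. \end{cases}
\end{equation*}
A symmetric alternative for the second case is the uniform distribution over the $2^n-2$ non-constant strings.

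Next comes the check that this behavior wins perfectly. In the first case only unanimous strings occur. In the second case, with $n\geq 2$, the strings $0^{n-1}1$ and $1^{n-1}0$ both contain a $0$ and a $1$, so they are never unanimous. The winning probability is therefore $1$, which is the algebraic maximum, and hence single-$NS(\mathcal{U})=1$.

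The remaining step is the marginal computation. In each of the two distributions, every coordinate takes the value $0$ in exactly one of the two equally weighted strings and $1$ in the other. Hence $p_i(a_i|\mathbf{x})=1/2$ for every party $i$ and every $\mathbf{x}$. (For the uniform alternative, each coordinate equals $0$ in $2^{n-1}-1$ of the $2^n-2$ strings, which again gives $1/2$.) Each party's marginal is thus independent of all inputs, including those of the other parties, and this is exactly single-party no-signaling.

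The only potential obstacle is interpretive. One must be clear that "single-party no-signaling" constrains only one-party marginals, and not the marginals of larger groups. Indeed, the two-party marginals of $p_U$ do depend on the remaining inputs: in $1^{n-1}0$ the first two parties agree, whereas in the unanimous strings they also agree, but the joint pattern shifts for other pairs. This is why the statement is restricted to the weakest notion of multipartite no-signaling. As in Corollary~\ref{Th:NS saturation bipartite}, a corollary would follow that two outcomes saturate the single-party NS polytope.
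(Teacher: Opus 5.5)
Your proposal is correct and follows the paper's proof: you use the same unanimous mixture $\tfrac12(\delta_{\mathbf{a},0^n}+\delta_{\mathbf{a},1^n})$ on $S_=$, and on the complement your uniform alternative is exactly the paper's choice, while your complementary pair $\{0^{n-1}1,1^{n-1}0\}$ is a leaner variant that uses the same $0\leftrightarrow 1$ symmetry to make every single-party marginal uniform. The example in your closing aside on two-party marginals does not actually show a dependence, since the first two parties agree in both cases; the pair $(1,n)$, which agrees on $S_=$ and disagrees off it, would. That remark is not needed for the proof.
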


Where ``single-party no-signaling" means that every party's marginals are independent of all other parties' inputs, that is:

\begin{equation*}
    p(a_i|x_i) = \sum_{a_k \setminus a_i } p(\mathbf{a}|\mathbf{x}) = \sum_{a_k \setminus a_i } p(\mathbf{a}|\mathbf{x'})
\end{equation*}

for all $\mathbf{x},\mathbf{x'}$ with $x_i = x'_i$.

\begin{proof}
  The proof like the one for bipartite is constructive. Let $V(\mathbf{a},\mathbf{x}) \in \{0,1\}$ be an arbitrary predicate a of a unanimous game. That is, 
\begin{equation*}
V(\mathbf{a},\mathbf{x}) = 
    \begin{cases}
       \delta_{a_1 , a_2 , a_3,.. }  &  \mathbf{x} \in S_{=} \\
       1 -  \delta_{a_1 , a_2 , a_3,.. }  & \mathbf{x} \not{\in} S_{=} 
     \end{cases}
\end{equation*}
where $S_{=}$ thus denotes the set of inputs for which the parties all have to output the same outcomes. Now take the following behavior
\begin{equation*}
\begin{aligned}
&p_U(\mathbf{a},\mathbf{x})
=\\
& =\begin{cases}
\frac12 \displaystyle\sum_{a=0}^{1}\delta_{a_1=a,\;a_2=a,\;a_3=a,\;\ldots},
& \mathbf{x}\in S_{=},\\[4pt]
\displaystyle\frac{1}{2^n-2}\Bigl(1-
\sum_{a=0}^{1}\delta_{a_1=a,\;a_2=a,\;a_3=a,\;\ldots}\Bigr),
& \mathbf{x}\notin S_{=}.
\end{cases}
\end{aligned}
\end{equation*}

that is, $p_U(\mathbf{a},\mathbf{x})$ has all outputs equal on $S_{=}$ (all $0$ with probability $1/2$, all $1$ with probability $1/2$), and never have all outputs equal otherwise (with uniform distribution over all remaining possibilities). Therefore, for any valid prior $\mu(\mathbf{x})$ associated to predicate $V(\mathbf{a},\mathbf{x})$, behavior $p_U(\mathbf{a},\mathbf{x}))$ achieves a perfect score (i.e. has a winning probability) of $1$.

Moreover, $p_U(\mathbf{a},\mathbf{x})$ is single-party no-signaling, as all single-party marginals are perfectly random:

\begin{align*}
p_A(a_i|x_i) = \sum_{a_k \setminus a_i } p_U(\mathbf{a}|\mathbf{x}) = 1/2 
\end{align*}
since both probability distributions appearing in $p_U(\mathbf{a}|\mathbf{x})$ are symmetric with respect to outcomes $0$ and $1$. 
  \end{proof}
Thus, just like in the bipartite case  $2$ outcomes are always enough to saturate the single-party no-signaling bound of arbitrary multipartite unanimous inequalities. Again, this allows us to immediately establish the following result:
\begin{corollary}
    \label{Th:NS saturation multipartite}
    For any $(n,m,k)$ unanimous scenario, the single-party no-signaling polytope single-$\mathcal{NS}_U(n,m,k)$ is saturated for $k=2$ outcomes. That is, single-$\mathcal{NS}_U$ obtained from $(n,m,k)$ coincides with that from $(n,m,2)$ for all $k \ge 2$.
\end{corollary}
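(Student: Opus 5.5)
The claim is Corollary~\ref{Th:NS saturation multipartite}: for single-party no-signaling unanimous scenarios, nothing new appears beyond $k=2$ outcomes. The plan is to deduce it from Theorem~\ref{Th:NS pseudo-telepathy multipartite}, in the same way Corollary~\ref{Th:NS saturation bipartite} follows from Theorem~\ref{Th:NS pseudo-telepathy bipartite}. The central object is the projection
\[
p(\mathbf{a}|\mathbf{x}) \mapsto q_{\mathbf{x}} := p(=_{ABC\ldots}|\mathbf{x}).
\]
This sends the single-party no-signaling set of the $(n,m,k)$ scenario onto a convex subset single-$\mathcal{NS}_U(n,m,k)\subseteq[0,1]^{m^n}$.

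\textbf{Step 1 (monotonicity).} First I will check that single-$\mathcal{NS}_U(n,m,k)\subseteq$ single-$\mathcal{NS}_U(n,m,k')$ whenever $k\le k'$. A $k$-outcome single-party NS behavior can be embedded into $k'$ outcomes by assigning probability zero to the extra outcomes. This preserves both the single-party marginal conditions and the values $q_{\mathbf{x}}$. In particular, everything reachable with $2$ outcomes is reachable with any $k\ge 2$.

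\textbf{Step 2 (the $k=2$ set is the whole cube).} Next I will show that single-$\mathcal{NS}_U(n,m,2)$ contains every vertex of $[0,1]^{m^n}$. Given a $0/1$ vector $q$, set $S_{=}=\{\mathbf{x}: q_{\mathbf{x}}=1\}$. The behavior $p_U$ constructed in the proof of Theorem~\ref{Th:NS pseudo-telepathy multipartite} is then single-party NS and uses only outcomes $\{0,1\}$. It has unanimity probability exactly $1$ on $S_{=}$ and exactly $0$ off $S_{=}$, so it realizes $q$.

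For the $\mathbf{x}\notin S_{=}$ branch I must confirm that the uniform distribution over the $2^n-2$ non-constant strings in $\{0,1\}^n$ still has marginal $1/2$. This holds by the $0\leftrightarrow1$ symmetry, and it requires $n\ge2$. Since the single-party NS set is convex, its image is convex, so it contains the whole cube.

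\textbf{Step 3 (upper bound).} For every $k$, single-$\mathcal{NS}_U(n,m,k)\subseteq[0,1]^{m^n}$ trivially, because each $q_{\mathbf{x}}$ is a probability. Combining this with Steps 1 and 2 gives equality with $[0,1]^{m^n}$ for all $k\ge2$, which proves the corollary.

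The only real care point is Step 2. The paper phrases Theorem~\ref{Th:NS pseudo-telepathy multipartite} as a statement about game values, whereas the corollary needs equality of sets. The most direct route is to apply the explicit construction vertex by vertex, as above. If instead one wanted to argue only from the game values, the alternative would be to note that unanimous games correspond to all sign patterns of the inequality coefficients. Every facet of the cube is of the form $q_{\mathbf{x}}\le1$ or $q_{\mathbf{x}}\ge0$, and one would separate any candidate point outside the $2$-outcome set from it using such facets. I would present the vertex argument, since it avoids this detour.
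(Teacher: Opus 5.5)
Your proposal is correct and follows the paper's route. Both arguments rest on the explicit two-outcome, single-party no-signaling behavior $p_U$ from the proof of Theorem~\ref{Th:NS pseudo-telepathy multipartite}. The paper passes ``immediately'' from the fact that every unanimous game attains value $1$ with two outcomes to equality of the polytopes. That step is implicitly a support-function argument: the maximum of every functional $\sum_{\mathbf{x}}\beta_{\mathbf{x}}q_{\mathbf{x}}$ is already $\sum_{\mathbf{x}:\beta_{\mathbf{x}}>0}\beta_{\mathbf{x}}$ at $k=2$. You make the step explicit by realizing every vertex of $[0,1]^{m^n}$ and invoking convexity, which also shows the saturated set is the full cube.
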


\end{document}

%% file: Figures/tripartiteeven.tikz
\begin{tikzpicture}[scale=0.8, every node/.style={font=\large}, transform shape]

\tikzset{
  v/.style={circle, fill=black, inner sep=2.6pt},
  ring/.style={circle, dotted, draw=black, line width=0.8pt, inner sep=6pt},
  edge/.style={line width=2pt},
  partlabel/.style={font=\LARGE},
  groupline/.style={draw, line width=1.2pt}
}

\coordinate (A0) at (-1.3,3.0);
\coordinate (A1) at (0.0,3.0);
\coordinate (A2) at (1.3,3.0);

\coordinate (B0) at (-3.0,1.2);
\coordinate (B1) at (-2.2,0.2);
\coordinate (B2) at (-1.4,-0.8);

\coordinate (C0) at ( 1.4,-0.8);
\coordinate (C1) at ( 2.2, 0.2);
\coordinate (C2) at ( 3.0, 1.2);

\def\GroupScale{0.8} 
\def\GroupXRadius{2.35*\GroupScale}
\def\GroupYRadius{0.8*\GroupScale}
\draw[groupline] (A1) ellipse [x radius=\GroupXRadius, y radius=\GroupYRadius];

\begin{scope}[rotate around={-51.34:(B1)}]
  \draw[groupline] (B1) ellipse [x radius=\GroupXRadius, y radius=\GroupYRadius];
\end{scope}

\begin{scope}[rotate around={ 51.34:(C1)}]
  \draw[groupline] (C1) ellipse [x radius=\GroupXRadius, y radius=\GroupYRadius];
\end{scope}

\node[v] (a0) at (A0) {};
\node[v] (a1) at (A1) {};
\node[v] (a2) at (A2) {};

\node[v] (b0) at (B0) {};
\node[v] (b1) at (B1) {};
\node[v] (b2) at (B2) {};

\node[v] (c0) at (C0) {};
\node[v] (c1) at (C1) {};
\node[v] (c2) at (C2) {};

\node (A0lbl) at ($(A0)+(0,0.95)$) {0};
\node (A1lbl) at ($(A1)+(0,0.95)$) {1};
\node (A2lbl) at ($(A2)+(0,0.95)$) {2};

\node (B0lbl) at ($(B0)+(-1.0,0.0)$) {0};
\node (B1lbl) at ($(B1)+(-1.0,-0.6)$) {1};
\node (B2lbl) at ($(B2)+(-1.0,-0.6)$) {2};

\node (C0lbl) at ($(C0)+(1.0,0.0)+(-0.25,-0.35)$) {0};
\node (C1lbl) at ($(C1)+(1.0,-0.6)+(-0.25,-0.35)$) {1};
\node (C2lbl) at ($(C2)+(1.0,-0.6)+(-0.25,-0.35)$) {2};

\node[ring] at (A2) {};
\node[ring] at (B2) {};
\node[ring] at (C2) {};

\node[partlabel] at ($(A1)+(0,1.55)$) {A}; \node[partlabel] at ($(B1)+(-1.55,-0.95)$) {B}; \node[partlabel] at ($(C1)+( 1.80,-0.95)$) {C};
\draw[edge] (a0) -- (b0);
\draw[edge] (a1) -- (c1);
\draw[edge] (b1) -- (c0);

\end{tikzpicture}

%% file: Figures/tripartiteodd.tikz
\begin{tikzpicture}[scale=0.7, every node/.style={font=\large}, transform shape]

\tikzset{
  v/.style={circle, fill=black, inner sep=2.6pt},
  ring/.style={circle, dotted, draw=black, line width=0.8pt, inner sep=6pt},
  dashring/.style={circle, dashed, draw=black, line width=0.8pt, inner sep=6pt},
  edge/.style={line width=2pt},
  partlabel/.style={font=\LARGE},
  groupline/.style={draw, line width=1.2pt}
}

\coordinate (A0) at (-1.3,3.0);
\coordinate (A1) at (0.0,3.0);
\coordinate (A2) at (1.3,3.0);
\coordinate (A3) at (2.6,3.0);

\coordinate (B0) at (-3.0,1.2);
\coordinate (B1) at (-2.2,0.2);
\coordinate (B2) at (-1.4,-0.8);
\coordinate (B3) at (-0.6,-1.8);

\def\Cdown{0,-0.35}

\coordinate (C0) at ($(1.4,-0.8)+(\Cdown)$);
\coordinate (C1) at ($(2.2, 0.2)+(\Cdown)$);
\coordinate (C2) at ($(3.0, 1.2)+(\Cdown)$);
\coordinate (C3) at ($(3.8, 2.2)+(\Cdown)$);

\coordinate (Acent) at ($(A0)!0.5!(A3)$);
\coordinate (Bcent) at ($(B0)!0.5!(B3)$);
\coordinate (Ccent) at ($(C0)!0.5!(C3)$);

\def\GroupScale{0.8}
\def\GroupXRadius{3.55*\GroupScale}
\def\GroupYRadius{0.95*\GroupScale}

\draw[groupline] (Acent) ellipse [x radius=\GroupXRadius, y radius=\GroupYRadius];

\begin{scope}[rotate around={-51.34:(Bcent)}]
  \draw[groupline] (Bcent) ellipse [x radius=\GroupXRadius, y radius=\GroupYRadius];
\end{scope}

\begin{scope}[rotate around={ 51.34:(Ccent)}]
  \draw[groupline] (Ccent) ellipse [x radius=\GroupXRadius, y radius=\GroupYRadius];
\end{scope}

\node[v] (a0) at (A0) {};
\node[v] (a1) at (A1) {};
\node[v] (a2) at (A2) {};
\node[v] (a3) at (A3) {};

\node[v] (b0) at (B0) {};
\node[v] (b1) at (B1) {};
\node[v] (b2) at (B2) {};
\node[v] (b3) at (B3) {};

\node[v] (c0) at (C0) {};
\node[v] (c1) at (C1) {};
\node[v] (c2) at (C2) {};
\node[v] (c3) at (C3) {};

\def\Aup{1.10} 
\node[anchor=south] (A0lbl) at ($(A0)+(0,\Aup)$) {0};
\node[anchor=south] (A1lbl) at ($(A1)+(0,\Aup)$) {1};
\node[anchor=south] (A2lbl) at ($(A2)+(0,\Aup)$) {2};
\node[anchor=south] (A3lbl) at ($(A3)+(0,\Aup)$) {3};

\node (B0lbl) at ($(B0)+(-1.0,0.0)$) {0};
\node (B1lbl) at ($(B1)+(-1.0,-0.6)$) {1};
\node (B2lbl) at ($(B2)+(-1.0,-0.6)$) {2};
\node (B3lbl) at ($(B3)+(-1.0,-0.6)$) {3};

\def\Cshift{-0.25,-0.35}
\node[anchor=west] (C0lbl) at ($(C0)+(1.0,-0.6)+(\Cshift)$) {0};
\node[anchor=west] (C1lbl) at ($(C1)+(1.0,-0.6)+(\Cshift)$) {1};
\node[anchor=west] (C2lbl) at ($(C2)+(1.0,-0.6)+(\Cshift)$) {2};
\node[anchor=west] (C3lbl) at ($(C3)+(1.0,-0.6)+(\Cshift)$) {3};

\node[ring] at (A3) {};
\node[ring] at (B3) {};
\node[ring] at (C3) {};

\node[dashring] at (C2) {};

\node[partlabel] at ($(Acent)+(0,2)$) {A};
\node[partlabel] at ($(Bcent)+(-1.55,-1.15)$) {B};
\node[partlabel] at ($(Ccent)+( 1.55,-1.35)$) {C};

\draw[edge] (a0) -- (b0);
\draw[edge] (a1) -- (c1);
\draw[edge] (b1) -- (c0);
\draw[edge] (a2) -- (b2);

\end{tikzpicture}